\definecolor{darkblue}{rgb}{0.1,0.1,0.8}
\definecolor{brickred}{rgb}{0.8, 0.25, 0.33}
\definecolor{DarkGreen}{rgb}{0,0.6,0}
\newtheorem{theorem}{Theorem}
\newtheorem{lem}{Lemma}
\newtheorem{proposition}{Proposition}
\theoremstyle{definition}
\newtheorem{definition}{Definition}
\newtheorem{example}{Example}
\newtheorem*{prob*}{Problem}
\def\old@comma{,}
    \old@comma\discretionary{}{}{}%
\def\ulineX{\underline{X}}
\def\ulinex{\underline{x}}
\def\ulineW{\underline{W}}
\def\ulinew{\underline{w}}
\def\ulineCalW{\underline{\mathcal{W}}}
\def\ulineX{\underline{X}}
\def\CalX{\mathcal{X}}
\def\CalY{\mathcal{Y}}
\def\FF{\mathbb{F}}
\def\EE{\mathbb{E}}
\def\PP{\mathbb{P}}
\def\11{\mathbbm{1}}
\def\UCodeword{\mathtt{w}_1(a_1,m_1,\mu_1)}
\def\VCodeword{\mathtt{w}_2(a_2,m_2,\mu_2)}
\def\TDelta{\mathcal{T}_{\delta}^{(n)}}
\def\deq{\mathrel{\ensurestackMath{\stackon[1pt]{=}{\scriptstyle\Delta}}}}
\def\define{\mathrel{\ensurestackMath{\stackon[1pt]{=}{\scriptstyle\Delta}}}}
\newcommand*{\medcup}{\mathbin{\scalebox{1.2}{\ensuremath{\cup}}}}%
\def\gammaCoeff{\gamma_{w_1^n}^{(\mu_1)}} \def\zetaCoeff{\zeta_{w_2^n}^{(\mu_2)}} 
\newcounter{relctr} 
\everydisplay\expandafter{\the\everydisplay\setcounter{relctr}{0}} 
\newcommand\labelrel[2]{%
  \begingroup
    \refstepcounter{relctr}%
    \stackrel{\textnormal{(\alph{relctr})}}{\mathstrut{#1}}%
    \originallabel{#2}%
  \endgroup
}
\begin{document}

\title{\huge Synthesizing Correlated Randomness using Algebraic Structured Codes}


\author{\IEEEauthorblockN{Touheed Anwar Atif\IEEEauthorrefmark{1},
Arun Padakandla\IEEEauthorrefmark{2} and  S. Sandeep Pradhan\IEEEauthorrefmark{1} \\}
\IEEEauthorblockA{Department of Electrical Engineering and Computer Science,\\
\IEEEauthorrefmark{1}University of Michigan, Ann Arbor, MI 48109, USA.\\
\IEEEauthorrefmark{2}University of Tennessee, Knoxville, USA\\
Email: \tt touheed@umich.edu, arunpr@utk.edu, pradhanv@umich.edu}}




\maketitle

\begin{abstract}

In this problem, Alice and Bob, are provided $X_{1}^{n}$ and $X_{2}^{n}$ that are IID $p_{X_1 X_2}$. Alice and Bob can communicate to Charles over (noiseless) links of rate $R_1$ and $R_2$, respectively. Their goal is to enable Charles generate samples $Y^{n}$ such that the triple $(X_{1}^{n},X_{2}^{n},Y^{n})$ has a PMF that is close, in total variation, to $\prod p_{X_1 X_2 Y}$. In addition, the three parties may posses shared common randomness at rate $C$. We address the problem of characterizing the set of rate triples $(R_1,R_2,C)$ for which the above goal can be accomplished. We build on our recent findings and propose a new coding scheme based on coset codes. We analyze its information-theoretic performance and derive a new inner bound. We identify examples for which the derived inner bound is analytically proven to contain rate triples that are not achievable via any known unstructured code based coding techniques. Our findings build on a variant of soft-covering which generalizes its  applicability to the algebraic structured code ensembles. This adds to the advancement of the use structured codes in network information theory.  
\end{abstract}

\section{Introduction} 
\label{sec:intro}

The task of generating correlated randomness at different terminals in a network finds its applications in several communication and computing paradigms. This task is also fundamental to several cryptographic protocols. In this article, we provide a new information-theoretic coding framework for generating such correlated randomness in network scenarios.

We consider the scenario which was originally studied by authors in \cite{atif2020ISITsource}, as depicted in Fig~\ref{Fig:NetworkSoftCovering}. Three distributed parties, say
Alice, Bob and Charles, have to generate samples that are independent 
and identically distributed (IID) with a target probability mass function (PMF) 
$p_{X_{1}X_{2}Y}$. Alice and Bob are provided with samples that are IID according to 
$p_{X_{1}X_{2}}$ - the marginal of the target PMF 
$p_{X_{1}X_{2}Y}$. They have access to unlimited private randomness and share 
noiseless communication links of rates $R_{1},R_{2}$ with Charles. In addition, the three parties share common randomness at rate $C$. The authors in \cite{atif2020ISITsource} provided a set of sufficient conditions, i.e., an achievable rate region for such a scenario. However, can this rate-region be improved? This article answers the above question in the affirmative. 

It is well established that traditional coding techniques using unstructured codes do not achieve optimality for the several multi-terminal scenarios. For instance, the work by K\"orner-Marton \cite{korner1979encode} demonstrated this sub-optimality for a classical distributed lossless compression problem with symmetric binary sources using random linear codes. We harness analogous gains for the problem of generating correlated randomness at distributed parties. Specifically, we propose a coding scheme based on coset codes, analyze its information-theoretic performance and thereby derive a new inner bound (see Thm.~\ref{Thm:Distributed}). We identify an example for which the derived inner bound is analytically proven to contain rate triples that are not achievable in the earlier known results \cite{atif2020ISITsource}. While the derived inner bound does not subsume the one characterized in \cite{}, one can adopt the technique in \cite[Sec.~VII]{ahlswede1983source} - also demonstrated in a related context \cite{padakandla2016achievable} - to derive an inner bound that subsumes the inner bounds derived in \cite{atif2020ISITsource} and Thm.~\ref{Thm:Distributed}.


The problem of generating correlated randomness can be traced back to Wyner \cite{197401TIT_Wyn}, whose work discovered the important technical tool, called the \textit{soft covering}. This tool has found its application in diverse fields including  cryptography and quantum information theory. The work in \cite{atif2020ISITsource} further refined this tool by introducing a joint-typicality based application. As we illustrate in the sequel, this work adds another dimension to our current understanding of soft covering, what we term as the \textit{change of measure soft covering}.


A renewed interest in soft covering led Cuff \cite{CuffPhDThesis}, \cite{201311TIT_Cuf} to consider a point-to-point (PTP) version of the scenario depicted in Fig.~\ref{Fig:NetworkSoftCovering}, wherein Bob (or $X_{2}$) is absent. A side-information based scenario was subsequently studied in \cite{yassaee2015channel} and a converse provided in \cite{atif2020ISITsource}. In \cite{atif2020ISITsource} we studied the above scenario using unstructured coding techniques. 
A similar sequence of problems were also studied in the quantum setting \cite{winter2004extrinsic,wilde2012information,201907ISIT_HeiAtiPra}.


\begin{figure}
 \centering
\includegraphics[width=2.6in]{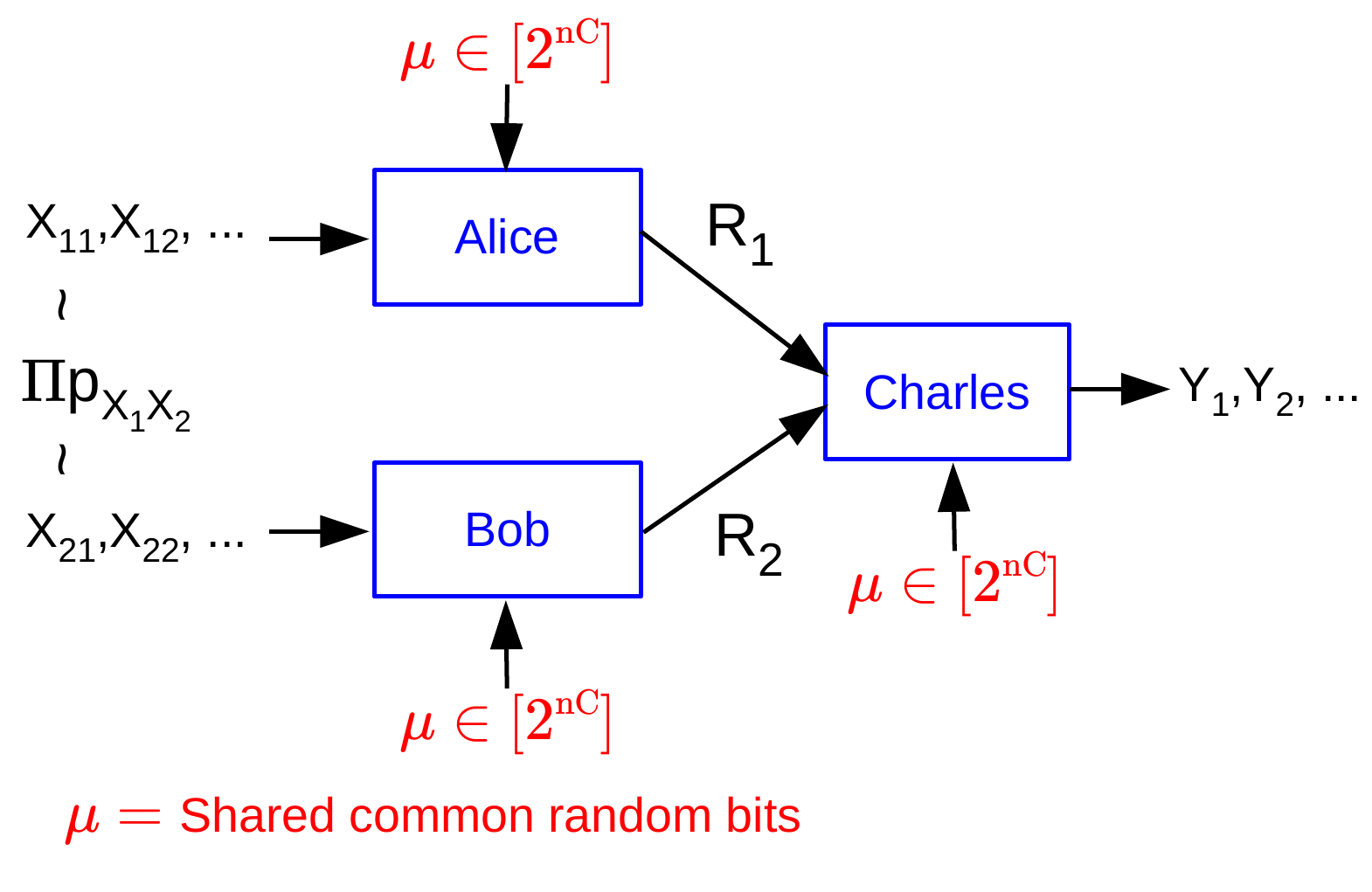}
\caption{ Source Coding for Synthesizing Correlated Randomness}
\label{Fig:NetworkSoftCovering}
\end{figure}


While all of the above works leverage the unstructured IID random codes, it has been proven that algebraic structured codes provide gains  in network communication involving distributed encoders \cite{krithivasan2011distributed,200710TIT_NazGas,200906TIT_PhiZam,201109TITarXiv_JafVis,pradhanalgebraic,padakandla2016achievable,padakandla2013computing}.
Motivated by this, we consider the distributed correlation synthesis problem depicted in Fig.~\ref{Fig:NetworkSoftCovering} and present a new achievable rate-region using structured coding techniques. 
We highlight two main challenges in this endeavour. The first challenge is to be able to achieve rates corresponding to non-uniform distributions. In particular, codewords within a random linear code has uniform empirical distributions. This requires us to enlarge our codes to be able to identify codeword with the desired single-letter distribution. We address this challenge by using a random shifts of cosets of a linear code as our code, henceforth referred to as Unionized Coset Codes (UCCs) \cite{pradhanalgebraic}.
The second challenge concerns the statistical dependence among codewords of a coset code. In contrast to IID codes, 
the codewords of a UCC are only pairwise independent \cite{Gal-ITRC68}. This prevents us from using the Chernoff concentration bound. We therefore develop novel techniques for our information theoretic study.
\section{Preliminaries and Problem Statement}
\label{Sec:Prelims}
We supplement standard information theory notation with the following. For a 
PMF $p_{X}$, we let $p^{n}_{X}=\prod_{i=1}^{n}p_{X}$. For an integer $n\geq 1$, 
$[n]\define \{1,\cdots,n\}$. The total variation between PMFs $p_X$ and $q_X$ 
defined over $\mathcal{X}$ is denoted $\|p_X-q_X\|_1 = 
\frac{1}{2}\sum_{x \in \mathcal{X}} |p_X(x) - q_X(x)|$.
 $\FF_p$ is used to denote a finite field of size $p$ with addition $\oplus.$ 

 Building on this, we address the network scenario 
(Fig.~\ref{Fig:NetworkSoftCovering}) for which we state the problem below. In the following, we let $\ulineX = (X_1,X_2),  \ulinex^n = (x^n_1,x^n_2)$.
\begin{definition}
\label{Defn:Distributed}
Given a PMF $p_{X_{1}X_{2}Y}$ on $\mathcal{X}_1\cross\mathcal{X}_2\cross\mathcal{Y}$, a rate triple $(R_1,R_2,C)$ is achievable, 
if $\forall\epsilon\!>\!0$ and sufficiently large $n$, 
there exists $2^{nC}$ randomized encoder pairs $E_{j}^{(\mu)} : 
\mathcal{X}_{j}^{n} 
\rightarrow [\Theta_{j}] : j \in [2], \mu \in [2^{nC}]$, and a 
corresponding collection of $2^{nC}$ randomized decoders 
$D^{(\mu)}:[\Theta_{1}]\times [\Theta_{2}] \rightarrow 
\mathcal{Y}^{n}$ for $\mu \in [2^{nC}]$ such that $\left| p^{n}_{\ulineX Y}- p_{\ulineX^{n} Y^{n}}\right|_{1} \leq \epsilon$, $\frac{1}{n}\log_2 \Theta_{j} \leq R_{j}+\epsilon : j \in [2]$, where
\begin{eqnarray}
p_{\ulineX^{n}Y^{n}}(\ulinex^{n},y^{n}) =\!\!\!\! \sum_{\mu \in 
[2^{nC}]}\!\!\!2^{-nC} \!\!\!\!\!\sum_{\substack{ (m_{1},m_{2}) 
\in \\ [ \Theta_{1}]\times [\Theta_{2}] }}\!\!\!\!
p^{n}_{\ulineX Y}(\ulinex^{n},y^{n})\nonumber\\p^{(\mu)}_{M_{1}|X_{1}^{n}}(m_{1}
|x_ { 1 } ^ { n } )p^ { (\mu)}_{ M_ { 2 } |X_{2}^{n}}(m_{2}|x_{2}^{n})p^{(\mu)}_ 
{ Y^{ n } |M_{1},M_{2}}(y^{n}|m_{1},m_{2}) 
\nonumber
\end{eqnarray}
$p^{(\mu)}_{M_{j}|X_{j}^{n}} : j \in [2], p^{(\mu)}_{Y^{n}|M_{1},M_{2}}$ are the 
PMFs induced by the two randomized encoders and decoder respectively, 
corresponding to common randomness message $\mu$. We 
let $\mathcal{R}_{d}(p_{\ulineX Y})$ denote the set of achievable rate triples.
\end{definition}
Theorem~\ref{Thm:Distributed} provides a new characterization of $\mathcal{R}_{d}(p_{\ulineX Y})$ based on coset codes, for the above described problem statement. This characterization provides a new inner bound to the achievable rate-region. An essential aspect of our work is the identification of a PMF $p_{X_1X_2Y}$ for which the coding scheme described in \cite{atif2020ISITsource,atif2020source}
is strictly sub-optimal.




\section{Distributed Soft Covering using Algebraic Structured Random Codes}
\label{Sec:DistribuetdSoftCovering}

\subsection{Change of Measure Soft Covering}\label{lem:changeMeasureCovering}
Before presenting the main result of the paper, we develop the necessary tools and  provide a lemma which is crucial for the upcoming results. This lemma extends the cloud mixing result of \cite{201311TIT_Cuf} with a mismatched codebook generation process.
The lemma is as follows.
\begin{lem}\label{lem:changeMeasureCoveringLemma}
Consider a PMF $p_{XY}$ on $\mathcal{X}\cross\mathcal{Y},$ and let $R$ be a finite non-negative integer. Additionally, assume that there exists some set $\Bar{\CalX}$ containing the set $\CalX$, with $p_{XY}(x,y) = 0$ for all $x \in \Bar{\CalX}\backslash\CalX$. Suppose $q_X$ is any PMF on the set $\Bar{\CalX}$ such that the PMF $p_X$ is absolutely continuous with respect to the $q_X$. Let a random code $\mathbb{C} \deq \{X^n(m): m\in[2^{nR}]\}$ be defined as a collection of codewords chosen pairwise independently from the set $\Bar{\CalX}$ according to the PMF $q_X^n$. Then we have for $R \geq H_{q}(X) - H_p(Y|X) = I_{p}(X;Y) - H_p(X) + H_q(X)$,
\begin{align}
    \lim_{n\rightarrow \infty} \EE_{\mathbb{C}}&\left[\sum_{y^n \in \CalY^n}\Big|p^n_Y(y^n) - \frac{1}{M}\sum_{m=1}^{2^{nR}}\frac{p^n_X(X^n(m))}{q^n_X(X^n(m))}p^n_{Y|X}(y^n|X^n(m))\Bigg|\right] =0\nonumber
\end{align}
\end{lem}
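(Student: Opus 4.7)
The plan is to reduce this to a second-moment (Chebyshev-style) argument, since pairwise independence of the codewords precludes the Chernoff bound underlying classical soft covering. The key structural observation is that the importance weight $p_X^n/q_X^n$ renders the estimator unbiased: absolute continuity $p_X\ll q_X$ ensures $q_X^n(x^n)>0$ whenever $p_X^n(x^n)>0$, so for every $y^n\in\mathcal{Y}^n$,
\begin{equation*}
\EE_{X^n\sim q_X^n}\!\left[\tfrac{p_X^n(X^n)}{q_X^n(X^n)}\,p_{Y|X}^n(y^n|X^n)\right]=\sum_{x^n\in\mathcal{X}^n}p_X^n(x^n)\,p_{Y|X}^n(y^n|x^n)=p_Y^n(y^n).
\end{equation*}
Writing $W(x^n,y^n)\define\frac{p_X^n(x^n)}{q_X^n(x^n)}p_{Y|X}^n(y^n|x^n)$ and $\hat{P}(y^n)\define\frac{1}{M}\sum_{m=1}^{M}W(X^n(m),y^n)$ with $M=2^{nR}$, the task reduces to controlling $\EE\sum_{y^n}\bigl|\hat{P}(y^n)-p_Y^n(y^n)\bigr|$.

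I would next truncate to the strongly jointly $\epsilon$-typical set $T_\epsilon^{(n)}$ of $p_{XY}$, defining $W^{*}(x^n,y^n)\define W(x^n,y^n)\cdot\mathbf{1}\{(x^n,y^n)\in T_\epsilon^{(n)}\}$ and $\hat{P}^{*}(y^n)\define\frac{1}{M}\sum_m W^{*}(X^n(m),y^n)$. A triangle inequality then splits the quantity of interest into (i) a truncation-bias term $\sum_{y^n}|p_Y^n(y^n)-\EE\hat{P}^{*}(y^n)|$, (ii) an atypical-mass term $\EE\sum_{y^n}|\hat{P}(y^n)-\hat{P}^{*}(y^n)|$, and (iii) a concentration term $\EE\sum_{y^n}|\hat{P}^{*}(y^n)-\EE\hat{P}^{*}(y^n)|$. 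Parts (i) and (ii) are each upper bounded directly by $p_{XY}^n\bigl((T_\epsilon^{(n)})^c\bigr)$ from the definitions of $W^*$, which is $o(1)$ by the typicality theorem.

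The remaining concentration term is where pairwise independence is used decisively. For each typical $y^n$, $\EE|\hat{P}^{*}(y^n)-\EE\hat{P}^{*}(y^n)|\le\sqrt{\mathrm{Var}(\hat{P}^{*}(y^n))}$, and cross terms vanish under pairwise independence so that $\mathrm{Var}(\hat{P}^{*}(y^n))\le\frac{1}{M}\EE_{q_X^n}[W^{*}(X^n,y^n)^{2}]$. Applying Cauchy--Schwarz over the at most $2^{n(H_p(Y)+\epsilon)}$ typical $y^n$ (atypical $y^n$ contribute zero to the truncated sum) gives
\begin{equation*}
\EE\sum_{y^n}|\hat{P}^{*}(y^n)-\EE\hat{P}^{*}(y^n)|\le\sqrt{\tfrac{2^{n(H_p(Y)+\epsilon)}}{M}\sum_{(x^n,y^n)\in T_\epsilon^{(n)}}\tfrac{p_X^n(x^n)^2\,p_{Y|X}^n(y^n|x^n)^2}{q_X^n(x^n)}}.
\end{equation*}
Substituting the typicality estimates $p_X^n(x^n)\doteq 2^{-nH_p(X)}$, $p_{Y|X}^n(y^n|x^n)\doteq 2^{-nH_p(Y|X)}$, $q_X^n(x^n)\doteq 2^{n\EE_p[\log q_X(X)]}$, and $|T_\epsilon^{(n)}|\le 2^{n(H_p(X,Y)+\epsilon)}$, the exponents collapse to a bound of order $2^{-n[R-I_p(X;Y)-D(p_X\|q_X)]/2+O(\epsilon)}$. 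Under the uniform-$q_X$ regime relevant to UCC ensembles one has $D(p_X\|q_X)=H_q(X)-H_p(X)$, so the threshold matches the stated $R\ge I_p(X;Y)-H_p(X)+H_q(X)$; letting $\epsilon\downarrow 0$ completes the argument.

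The principal obstacle is the exponent bookkeeping in step (iii): tracking typicality through $p_X^n(x^n)^2/q_X^n(x^n)$ relies crucially on the lower bound for $q_X^n$ on the $p$-typical set furnished by absolute continuity, and recovering the stated threshold requires identifying the cross-entropy $-\EE_p[\log q_X(X)]$ with $H_q(X)$ in the uniform-$q_X$ case. The remainder is a standard typicality-plus-second-moment combination, with the genuine novelty residing in accommodating the change-of-measure weight $p_X^n/q_X^n$ inside the second-moment estimate --- a variance that would blow up for generic $q_X$ but stays exponentially controlled precisely under the rate condition above.
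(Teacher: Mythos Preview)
Your proposal is correct and matches the approach the paper invokes: the paper omits the proof entirely, pointing to Cuff's Lemma~19, which is precisely the typical-set truncation plus second-moment (pairwise-independence) argument you outline. Your additional remark that the stated threshold $I_p(X;Y)-H_p(X)+H_q(X)$ coincides with the general threshold $I_p(X;Y)+D(p_X\|q_X)$ only when $q_X$ is uniform is a useful clarification, and it is exactly the regime in which the lemma is applied throughout the paper (where $q_X$ is uniform on $\mathbb{F}_p$ and $H_q(X)=\log p$).
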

\begin{proof}
The proof follows similar analysis as the proof of \cite[Lemma 19]{cuff2009communication} as hence is omitted.

\end{proof}
\subsection{Main Result}
Our main result is the characterization of $\mathcal{R}_{s}(p_{\ulineX Y})$ which is the inner bound to $\mathcal{R}_{d}(p_{\ulineX Y})$. In the following, we let $\ulineX = (X_1,X_2), \ulineW = (W_1,W_2), \ulinex = (x_1,x_2)$ and $\ulinew = (w_1,w_2)$.
\begin{theorem}
\label{Thm:Distributed}
Given a PMF $p_{X_1X_2Y}$, let $\mathcal{P}(p_{X_1X_2Y})$ denote the collection of all PMFs $p_{QW_1W_2\ulineX Y}$ defined on $\mathcal{Q}\times\mathcal{W}_1\times\mathcal{W}_2\times\mathcal{\ulineX}\times\mathcal{Y}$ such that 
(i) $p_{\ulineX Y}(\ulinex,y) =  \sum_{(q,\ulinew) \in \mathcal{Q} \times\ulineCalW}p_{Q\ulineW \ulineX Y}(q,\ulinew,\ulinex,y)$
for all $(\ulinex,y) \in \underline{\CalX}\cross\CalY$, (ii) $W_1-QX_1-QX_2-W_2$ and $\ulineX-Q\ulineW-Y$ are Markov chains, (iii) $|\mathcal{W}_1|\leq |\mathcal{X}_1|$, $|\mathcal{W}_2|\leq |\mathcal{X}_2|$. Further, let $\beta(p_{Q\ulineW\ulineX Y})$ denote the set of rates and common randomness triple $(R_1,R_2,C)$ that satisfy
\begin{align}\label{eq:rate-region}
    {R}_1 & \geq I(X_1;W_1|W_2,Q) + I(W_1\oplus W_2;W_2|Q) \nonumber\\
    {R}_2 & \geq I(X_2;W_2|W_1,Q) + I(W_1\oplus W_2;W_1|Q) \nonumber \\
    {R}_1 + C & \geq I(\ulineX;W_1|W_2,Q)  + I(Y;W_1|\ulineX,Q) + I(W_1\oplus W_2;W_2|Q)  \nonumber \\
    {R}_2 + C & \geq I(\ulineX;W_2|W_1,Q)  + I(Y;W_2|\ulineX,Q) + I(W_1\oplus W_2;W_1|Q)  \nonumber \\
    R_1+R_2+C & \geq I(\ulineX;W_1|W_2,Q) + I(\ulineX;W_2|W_1,Q) + I(W_1\oplus W_2;W_1|Q) + I(W_1\oplus W_2;W_2|Q),
\end{align}
where the above information theoretic terms are evaluated with respect to the PMF $p_{QW_1W_2\ulineX Y}$. Let
\begin{eqnarray}  
\mathcal{R}_{s}(p_{\underline{X}Y}) \deq \text{Closure}
    \left(\bigcup_{p_{Q\ulineW\ulineX Y} \in \mathcal{P}(p_{X_1X_2Y})}\beta(p_{Q\ulineW\ulineX Y})\right)
\end{eqnarray}    
We have 
\[
\mathcal{R}_{s}(p_{\underline{X}Y})
\subseteq \mathcal{R}_d(p_{\ulineX Y}).
\]
In other words, the rate triple $(R_1,R_2,C) \in \left(\bigcup_{p_{Q\ulineW\ulineX Y} \in \mathcal{P}(p_{X_1X_2Y})}\beta(p_{Q\ulineW\ulineX Y})\right)$ is achievable.
\end{theorem}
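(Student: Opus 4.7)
The plan is to construct a coding scheme based on Unionized Coset Codes (UCCs) and to analyze its performance using Lemma~\ref{lem:changeMeasureCoveringLemma}. I would fix a PMF $p_{QW_1W_2\ulineX Y} \in \mathcal{P}(p_{X_1X_2Y})$ and embed $\mathcal{W}_1, \mathcal{W}_2$ into a prime field $\FF_p$ so that the sum $W_1\oplus W_2$ is well-defined. The common randomness is used first to generate the time-sharing sequence $Q^n \sim p_Q^n$, and then to seed two UCCs $\mathcal{C}_1, \mathcal{C}_2$ over $\FF_p^n$, each being a union of cosets of a common underlying linear code, so that the pointwise sum $\mathcal{C}_1 \oplus \mathcal{C}_2$ also forms a coset code of controlled rate. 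Concretely, the codewords of $\mathcal{C}_j$ take the form $a G_a \oplus m_j G_{m_j} \oplus B_j^\mu$, where $G_a$ is a generator matrix shared between both users, $G_{m_j}$ is user-specific, $B_j^\mu$ is a uniformly random bias driven by $\mu$, the index $a$ is an internal ``covering'' auxiliary not transmitted, and $m_j \in [\Theta_j]$ is the message sent to Charles.

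With the codebook in place, I would describe the encoding and decoding operations. Alice's encoder, given $(X_1^n, Q^n)$, runs a joint-typicality / likelihood encoder in the spirit of \cite{201311TIT_Cuf} to select $(a_1, m_1)$ whose codeword $W_1^n(a_1, m_1, \mu)$ is jointly $\delta$-typical with $(X_1^n, Q^n)$ under $p_{QW_1X_1}$, and transmits $m_1$; Bob acts symmetrically. Upon receiving $(m_1, m_2)$, Charles uses $\mu$ and the coset structure to recover both $W_1^n$ and $W_2^n$ --- for instance by searching for a pair $(a_1, a_2)$ such that $W_1^n(a_1, m_1, \mu) \oplus W_2^n(a_2, m_2, \mu)$ is jointly typical with $Q^n$ under $p_{Q,W_1\oplus W_2}$ --- and then samples $Y^n$ i.i.d.\ from $p_{Y|QW_1W_2}^n$ using private randomness, which is legitimate by the Markov chain $\ulineX - Q\ulineW - Y$.

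The analysis then shows $\|p_{\ulineX^n Y^n} - p_{\ulineX Y}^n\|_1 \to 0$ via a chain of total-variation triangle inequalities, at each step invoking Lemma~\ref{lem:changeMeasureCoveringLemma} with $q_{W_j}$ equal to the uniform distribution on $\FF_p$ (the marginal of a single UCC codeword) and the relevant target conditional derived from $p_{QW_1W_2\ulineX Y}$. Two types of covering requirements surface: individual ones on each $\mathcal{C}_j$ (ensuring Alice and Bob each realize their target conditional distribution of $W_j$ given $X_j$ and $Q$), and a joint one on the sum codebook $\mathcal{C}_1 \oplus \mathcal{C}_2$ (ensuring that the decoded sum statistic has the target marginal $p_{W_1\oplus W_2|Q}$). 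The $I(Y; W_j \mid \ulineX, Q)$ contributions in the $R_j + C$ bounds originate from a further invocation of Lemma~\ref{lem:changeMeasureCoveringLemma} governing the approximation of $Y^n$ conditional on $(\ulineX^n, Q^n)$, for which the common randomness $C$ supplies the budget. Finally, Fourier--Motzkin elimination of the internal $a$-rate --- together with the shared-generator constraint, which is what gives rise to the characteristic $I(W_1\oplus W_2; W_j \mid Q)$ corrections --- produces the rate region~\eqref{eq:rate-region}.

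The principal obstacle is the pairwise-only independence of UCC codewords, which rules out the Chernoff-type concentration standard in i.i.d.\ random-coding arguments. Lemma~\ref{lem:changeMeasureCoveringLemma} is the key new ingredient that navigates this: it replaces concentration with an expectation plus second-moment bound on the $\ell_1$-error, yielding a covering threshold involving $H_q$ of the (uniform) coding distribution rather than $H_p$ of the target --- precisely the feature responsible for the $W_1\oplus W_2$ terms in the rate region. A secondary obstacle is that the sum codebook $\mathcal{C}_1\oplus\mathcal{C}_2$ has less statistical richness than a product of independent codebooks owing to the shared generator, so one must verify that the rate corrections $I(W_1\oplus W_2; W_j \mid Q)$ are precisely what is needed to restore the variability required to synthesize the target joint distribution.
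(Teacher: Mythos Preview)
Your high-level plan---UCCs with a shared generator, likelihood-style encoders, repeated application of Lemma~\ref{lem:changeMeasureCoveringLemma}, Fourier--Motzkin at the end---matches the paper. But the decoder you describe does not work, and the mismatch is exactly at the place where the structured-code gain comes from.

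You say Charles ``recovers both $W_1^n$ and $W_2^n$'' by searching for a pair $(a_1,a_2)$ whose sum codeword is typical for $W_1\oplus W_2$. With a shared generator this is impossible: since $W_1^n(a_1,m_1,\mu)\oplus W_2^n(a_2,m_2,\mu)=(a_1\oplus a_2)G+h_1^{(\mu_1)}(m_1)+h_2^{(\mu_2)}(m_2)$, the test sees only $a_1\oplus a_2$, so there are $p^k$ equally valid pairs and no way to pin down $W_1^n,W_2^n$ individually. The paper's decoder does not try: it searches for the single $\tilde a$ with $\tilde a G+h_1^{(\mu_1)}(m_1)+h_2^{(\mu_2)}(m_2)\in T_{\hat\delta}(Z)$, outputs that sum $Z^n$, and then passes it through $p_{Y|Z}^n$ rather than $p_{Y|W_1W_2}^n$. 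This is precisely what lets the binning rebate be $\log p-H(W_1\oplus W_2)$ rather than a joint-entropy quantity, and is the source of the $I(W_1\oplus W_2;W_j\mid Q)$ terms after elimination.

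A second, related slip: the ``joint requirement on the sum codebook'' is not a covering condition at all---it is a \emph{packing} (unique-decodability) condition, what the paper calls Mutual Packing in Proposition~\ref{prop:Lemma for S_2}. It yields the upper bound $S_j-R_j\le \log p-H(W_1\oplus W_2\mid Q)$ on how much binning rebate the shared coset structure can support; Lemma~\ref{lem:changeMeasureCoveringLemma} is used only for the encoder-side covering steps (Propositions~\ref{prop:Lemma for Tilde_S}, \ref{prop:Lemma for J1}, \ref{prop:Lemma for Q2}). Once you correct the decoder to output only the sum and treat the sum-codebook constraint as packing rather than covering, your outline coincides with the paper's proof.
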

Note that the rate-region obtained in Theorem 2 of  \cite{atif2020source} contains the constraint $R_1 + R_2 + C \geq I(X_1X_2Y;W_1W_2|Q)$. Hence when $2H(W_1\oplus W_2|Q) < H(W_1,W_2|Q),$ the above theorem gives a lower sum rate constraint. As a result, the rate-region above contains points that are not contained within the rate-region provided in \cite{atif2020source}. To illustrate this fact further, consider the following example.
\begin{example}
Let $X_1$ and $X_2$ be a pair of binary symmetric correlated sources with 
$P(X_2=1|X_1=0)=p$, for some $p \in (0,0.5)$. Let 
$Y=X_1 \oplus X_2 \oplus Q$, where $P(Q=1)=q$, for some $q \in (0,0.5)$.  
Consider $q=p=0.1$ for a numerical evaluation. Let us first consider the inner bound $\mathcal{R}_u(p_{\underline{X},Y})$ to the rate region $\mathcal{R}(p_{\underline{X},Y})$ given in 
\cite{atif2020ISITsource}, developed using unstructured code ensemble. Due to symmetry in the example, it turns out that the search over the auxiliary random variables for minimization reduces to a single-parameter minimization which can be computed through derivative techniques. The computation details are not provided for the sake of brevity. In particular, the minimum value of $R_1+R_2+C$ can be computed to be $1.3965$.  Next let us consider the new inner bound $\mathcal{R}_s(p_{\underline{X},Y})$ developed using structured code ensemble (Theorem 1). The minimum value of 
$R_1+R_2+C$ can be computed to be $0.9596$.

The results can also be verified for the special case of $q=0$ which we provide in the following.
Using the arguments given in proof of Proposition 1 of \cite{korner1979encode}, one can show 
that
\begin{align*}
\mathcal{R}_u(p_{\underline{X},Y}) &= 
\{(R_1,R_2,C): R_1 \geq h_b(p), R_2 \geq h_b(p), \\
& R_1+R_2 \geq 1+h_b(p), C\geq 0\}. 
\end{align*}
Next let us consider the new inner bound $\mathcal{R}_s(p_{\underline{X},Y})$ developed using structured code ensemble
(Theorem 1). 
By choosing $W_1=X_1$ and $W_2=X_2$, we see that the following triple of rates is achievable:
\[
\{(R_1,R_2,C): R_1 \geq h_b(p), R_2 \geq h_b(p), C\geq 0\}.
\]
In fact, one can show that this is optimal 
using the side information argument. If 
$X_2$ is sent losslessly, then from the converse argument in the side information case, we see that $R_1 \geq H(X_2|X_1)=h_b(p)$.

\end{example}

\section{Proof of Distributed Soft Covering using Algebraic Structured Random Codes}
The coding strategy used here is based on Unionized Coset Codes, defined in Definition \eqref{def:UCC}. The structure in these codes provides a method to exploit the structure present in the stochastic processing applied by decoder, i.e., $ P_{Y|W_1+W_2} $. Using this technique, we aim to strictly reduce the rate constraints compared to the ones obtained in Theorem 1 of \cite{atif2020ISITsource}.

Let $\mu \in [2^{nC}]$  denote the common randomness shared amidst 
all terminals. The first encoder uses a part of the entire common randomness available to it, say $C_1$ bits out of the $C$ bits, which is denoted by $\mu_1 \in [2^{nC_1}]$. Similarly, let $\mu_2 \in [2^{nC_2}]$ denote the common randomness used by the second encoder.  Our goal is to prove the existence of PMFs 
$p_{M_1|X_1^n}^{(\mu_1)}(m_1|x_1^n) : x_1^n \in \mathcal{X}_1^n, m_1 \in 
[\Theta_1], \mu_1 \in [2^{nC_1}], p^{\mu_2}_{M_2|X_2^n}(m_2|x_2^n) : x_2^n \in 
\mathcal{X}_2^n, m_2 \in [\Theta_2], \mu_2 \in [2^{nC_2}]$, $p_{Y^n|M_1,M_2}(y^n| m_1,m_2) : y^n \in 
\mathcal{Y}^n, (m_1,m_2) \in 
[\Theta_1] \times [\Theta_2]$ such that 
\begin{align}
 \mathscr{Q}\define \!\cfrac{1}{2}&\!\!\sum_{\underline{x}^n,y^n}\Bigg|p^n_{\underline{X}Y}(\underline{x}^n,y^n) - \!\!\!\!\sum_{\substack{ \mu \in
[2^{nC}]  }}\sum_{\substack{m_1 \in[ \Theta_1],\\m_2 \in[ \Theta_2]}} \!\!\!\!
\frac{p^{n}_{\underline{X}}(\underline{x}^n)}{2^{nC}} p^{(\mu_1)}_{M_1|X_1^{n}}(m_1|x_1^{n})p^{(\mu_2)}_{M_2|X_2^{n}}(m_2|x_2^{n}
)p^{(\mu) } _ { Y^ { n } 
|\underline{M}}(y^{n}|\underline{m})\Bigg| \leq \varepsilon, \nonumber\\
\label{eq:dist_main_lemma}
&\frac{\log \Theta_j}{n} \leq R_j + \epsilon : j \in [2],&
\end{align}
for sufficiently large $n$. Fix a block length $n>0$, a positive integer $N$ and a finite field $\FF_p$. Further, let $W_1$ and $ W_2 $ be random variables defined on the alphabets $ \mathcal{W}_1 $ and $ \mathcal{W}_2 $, respectively, where $ \mathcal{W}_1=\mathcal{W}_2=\FF_p, $ and let $Z \deq W_1 \oplus W_2.$ In building the code, we use the Unionized Coset Codes (UCCs)  \cite{pradhanalgebraic} defined as below. These codes involve two layers of codes (i) a coarse code and (ii) a fine code. The coarse code is a coset of the linear code and the fine code is the union of several cosets of the linear code.

For a fixed $k \times n$ matrix $G \in \mathbb{F}_p^{k \times n}$ with 
$k \leq n$, and a $1 \times n$
vector $B \in \mathbb{F}_p^n$, define the coset code as 
\[
\mathbb{C}(G,B)\deq  \{ x^n: x^n = a^{k} G+B, \mbox{ for some } a^{k} \in \mathbb{F}_p^{k}
\}.
\]
In other words, $\mathbb{C}(G,B)$ is a shift of the row space of
the matrix $G$. The row space of $G$ is a linear code. 
If the rank of $G$ is $k$, then there are $p^{k}$ codewords in the coset
code.

\begin{definition}\label{def:UCC}
An $(n,k,l,p)$ UCC is a pair $(G,h)$ consisting of a $k\times n$ matrix $G \in \mathbb{F}_p^{k \times n}$, and a mapping $h:\mathbb{F}_p^{l}
\rightarrow \mathbb{F}_p^n$. In the context of UCC, define the composite code as
$\mathbb{C}=\bigcup_{i \in \mathbb{F}_p^{l}} \mathbb{C}(G,h(i))$.
\end{definition}

For every $ \mu \deq (\mu_{1},\mu_{2}) $, consider two UCCs $ (G^{},h_1^{(\mu_1)}) $ and $ (G^{},h_2^{(\mu_2)}) $, each with parameters $ (n,k,l_1,p) $ and $ (n,k,l_2,p) $, respectively. Note that, for every $ \mu\in [N], $ the generator matrix $ G^{}$ remains the same.

For each $ (\mu_1,\mu_2) $, the generator matrix $ G^{} $ along with the function $ h_1^{\mu_1} $ and $ h_2^{\mu_2} $ generates $ p^{k + l_1} $ and $ p^{k+l_2} $ codewords, respectively. Each of these codewords are characterized by a triple $ (a_i,m_i,\mu_i)$, where $ a_i \in \FF^{k}_p $ and $ m_i \in \FF^{l_i}_p$  corresponds to the coarse code and the fine code indices, respectively, for $ i\in[2]$. Let $ \UCodeword $ and $ \VCodeword $ denote the codewords associated with   Alice and Bob,  generated using the above procedure, respectively, where $
\UCodeword \define a_1 G^{} + h_1^{(\mu_1)}(i),$ and  $\VCodeword \define a_2 G^{} + h_2^{(\mu_2)}(j).$

Consider the collections $c_{1} = 
(c_{1}^{(\mu_1)}:1 \leq \mu_1 \leq 2^{nC_1})$ where $c_{1}^{(\mu_1)} = 
(\mathtt{w}_{1}(l_{1},\mu_1) : 1\leq l_{1} \leq 2^{n\tilde{R}_{1}})$ and $c_2 = 
(c_{1}^{(\mu_1)}:1 \leq \mu_1 \leq 2^{nC_1})$
where $c_2^{(\mu_2)}
= (\mathtt{w}_{2}(l_{2},\mu_2): 1 \leq l_{2} \leq 
2^{n\tilde{R}_{2}})$. For this 
collection, we let 
\begin{align}
    E^{(\mu_1)}_{L_1|X_1^n}(a_1,\!m_1|x_1^n) &\deq \sum_{\substack{w_1^n \in T_{\delta}(W_1|x_1^n)}}{p^n}\frac{p^n_{W_1|X_1}\!\!(w_1^n|x_1^n)}{2^{nS_1}(1+\eta)}\!\mathbbm{1}_{\{\UCodeword = w_1^n\}},\nonumber \\
    E_{L_2|X_2^n}^{(\mu_2)}(a_2,\!m_2|x_2^n) &\deq \sum_{\substack{w_2^n \in T_{\delta}(W_2|x^n_2)}}{p^n}\frac{p^n_{W_2|X_2}\!(w_2^n|x_2^n)}{2^{nS_2}(1+\eta)}\!\mathbbm{1}_{\{\VCodeword = w_2^n\}}.\label{def:E_L|X}
\end{align}
The definition of $E^{(\mu_1)}_{L_1|X_1^n}$ and $E^{(\mu_2)}_{L_2|X_2^n}$ can be thought of as encoding rules that do not exploit the additional rebate obtained by using binning techniques, specifically in a distributed setup.

\subsection{Binning of Random Encoders}
\label{sec:PMF binning}
We next proceed to binning the above constructed collection of random encoders.
Since, UCC is already a union of several cosets, we associate a bin to each coset, and place all the codewords of a coset in the same bin. For each $i\in \FF_p^{l_1}$ and $j\in \FF_p^{l_2}$, let $\mathcal{B}^{(\mu_1)}_1(i) \deq \mathbb{C}(G^{},h_1^{(\mu_1)}(i))$ and $\mathcal{B}^{(\mu_2)}_2(j) \deq \mathbb{C}(G^{},h_2^{(\mu_2)}(j))$ denote the $i^{th}$ and the  $j^{th}$ bins, respectively. Formally, we define the following PMFs. 
\begin{eqnarray}\label{Eqn:dist_PMFInducedByEncoder1}
p_{M_{1}|X_{1}^{n}}^{(\mu_1)}(m_{1}|x_{1}^{n}) =
\begin{cases}
\mathbbm{1}_{\{m_i=0\}}& \mbox{if } s_{i}^{(\mu_i)}(x_{i}^{n}) > 1,\\
1-s_{i}^{(\mu_i)}(x_{i}^{n})&\mbox{if }m_{i}=0 \mbox{ and } s_{i}^{(\mu_i)}(x_{i}^{n}) \!\in [0, 1],\\
\displaystyle \sum_{a_i\in\FF_p^{k}}E^{(\mu_i)}_{L_i|X_i^n}(a_i,\!m_i|x_i^n) &\mbox{if }m_{i}\neq0 \mbox{ and 
}s_{i}^{(\mu_i)}(x_{i}^{n}) \!\in [0, 1],
\end{cases}\nonumber
\end{eqnarray}
for all $x_i^n \in T_{\delta}(X_i)$, $s_i^{(\mu_i)}(x_{i}^{n})$ defined as $s_i^{(\mu_i)}(x_{i}^{n}) \deq \sum_{a_i\in\FF_p^{k}}\sum_{m_i \in \FF_p^{l_i}}E^{(\mu_i)}_{L_i|X_i^n}(a_i,m_i|x_i^n)$ and $i\in[2]$.
For $x_1^n \notin T_{\delta}(X_1)$, we let $ p_{M_{1}|X_{1}^{n}}^{(\mu_1)}(m_{1}|x_{1}^{n}) = \mathbbm{1}_{\{m_1=0\}}.$

With this definition note that, $ \sum_{m_1=0}^{2^{nR_1}}p_{M_1|X_1^n}^{(\mu_1)}(m_1|x_1^n) = 1$ for all $\mu_1 \in [2^{nC_1}]$ and $x^n_1 \in \mathcal{X}_1^n$ and similarly, $ \sum_{m_2=0}^{2^{nR_2}}p_{M_2|X_2^n}^{(\mu_2)}(m_2|x_2^n) = 1$ for all $\mu_2 \in [2^{nC_2}]$ and $ x^n_2 \in \mathcal{X}_2^n$. 

Also, note that the effect of introducing binning (by defining the above PMFs) is in reducing the communication rates from $(S_1, S_2)$ to $(R_1,R_2)$, where $ R_i = \frac{l_i}{n}\log{p}, i\in\{1,2\}$. Now, we move on to describing the decoder.

\subsection{Decoder mapping }
We create a decoder that takes as an input a pair of bin numbers and produces a sequence  $W^n \in \FF^n_p$. More precisely, we define a mapping $f^{(\mu)}$ for $\mu \deq (\mu_1,\mu_2)$, acting on the messages $(m_1,m_2)$ as follows. 
On observing $ \mu $ and the classical indices $ (m_1,m_2) \in \FF_p^{l_1}\times \FF_p^{l_2} $ communicated by the encoder, the decoder constructs $D^{(\mu)}_{i,j} \deq \{\tilde{a} \in \FF_p^{k}:   \tilde{a}G^{} + h_1^{(\mu_1)}(i)+h_2^{(\mu_2)}(j) \in \mathcal{T}_{\hat{\delta}}^{(n)}(Z) \},$ and $f^{(\mu)}(m_1,m_2)$
\begin{align}
 \deq
 \begin{cases}
	  \tilde{a}G^{} + h_1^{(\mu_1)}(i)+h_2^{(\mu_2)}(j) &\quad \text{ if  } D^{(\mu_1,\mu_2)}_{i,j} \equiv \{\tilde{a}\} \\
	w^n_0 &\quad \text{ otherwise  }, 
\end{cases}
\label{eq:POVM-17}
\end{align}
where $\hat{\delta} = p\delta$ and
{$w_0^n$ is an additional sequence added to $ \FF_p^n $}.
Further, $f^{(\mu)}(m_1,m_2)=w_0^n$ for $i=0$ or $j=0$.
The decoder then performs a stochastic processing of the output and chooses $y^n$ according to PMF $p^n_{Y|Z}(y^n|f^{(\mu)}(m_1,m_2))$. This 
implies the PMF $p^{(\mu_1)}_{Y^n|M_1 M_2}(\cdot|\cdot)$ is given by
\begin{eqnarray}
\label{Eqn:dist_PMFInducedByDecoder}
p^{(\mu)}_{Y^n|M_1M_2}(\cdot|m_1,m_2) = p^n_{Y|Z}(y^{n}|f^{(\mu)}(m_1,m_2)).
\end{eqnarray}
We now begin our analysis of the total variation term given in (\ref{eq:dist_main_lemma}). 
\subsection{Analysis of Total Variation}
Our goal is to prove the existence of a collections $c_{1},c_{2}$ for which (\ref{eq:dist_main_lemma}) holds. We do this via random coding. Specifically, we prove that $\EE[{K}] \leq \epsilon$, where the expectation is over the ensemble of codebooks. The PMF induced on the ensemble of codebooks is as specified below. The codewords of the random codebook  
$C_{i}^{(\mu_i)} = (\mathtt{W}_i(a_i,m_i,\mu_i): a_i \in \FF_p^k, m_i \in \FF_p^{l_i})$ for each $\mu_i \in [2^{nC_i}]$ 
are only pairwise independent \cite{pradhanalgebraic} and distributed with PMF $ \mathbb{P}(\mathtt{W}_{i}(a_i,m_i,\mu_i ) = w_{i}^{n}) = \frac{1}{p^n}$
for each $i \in [2].$

\noindent \textbf{Step 1: Error caused by not covering}\\ 
We begin by splitting $K$ into two terms using the triangle inequality as $K \leq S + \tilde{S}$, where
\begin{align}
    S \deq &\sum_{\underline{x}^n,y^n}\Bigg|p^n_{\underline{X}Y}(\underline{x}^n,y^n) - \sum_{\substack{ \mu_1,\mu_2  }}\sum_{\substack{m_1 >0,\\m_2 >0}} 
\frac{p^{n}_{\underline{X}}(\underline{x}^n)p^{(\mu_1)}_{M_1|X_1^{n}}(m_1|x_1^{n})}{2^{n(C_1+C_2)}} p^{(\mu_2)}_{M_2|X_2^{n}}(m_2|x_2^{n}
)p^{(\mu) } _ { Y^ { n } 
|\underline{M}}(y^{n}|\underline{m})\Bigg|, \nonumber\\
\tilde{S} \deq &\sum_{\underline{x}^n,y^n}\Bigg|\sum_{\substack{ \mu_1,\mu_2 }}\sum_{\substack{m_1=0 \medcup m_2=0}}
\frac{p^{n}_{\underline{X}}(\underline{x}^n)}{2^{n(C_1+C_2)}}p^{(\mu_1)}_{M_1|X_1^{n}}(m_1|x_1^{n})p^{(\mu_2)}_{M_2|X_2^{n}}(m_2|x_2^{n}
)p^{(\mu) } _ { Y^ { n } 
|\underline{M}}(y^{n}|\underline{m})\Bigg|. \nonumber
\end{align}

\noindent Note that $\widetilde{S}$ captures the error induced by not covering $p^n_{\underline{X}Y}$.
For the term corresponding to $\widetilde{S}$, we 
prove the following result by developing the following lemma below followed by a proposition. 

\begin{lem}\label{lem:NotaPMFBound}
For the above defined notations, for $i \in \{1,2\}$, if $S_i \geq I(X_i;W_i) + \delta_{c_i}$, then the following holds true
\begin{align}\label{eq:NotaPMF}
\frac{1}{2^{nC_i}}\sum_{\substack{ \mu_i  }}\sum_{{x}_i^n}{p^{n}_{{X}_i}({x}_i^n)} \PP\Bigg(\bigg[\sum_{a_i\in\FF_p^{k}}\sum_{m_i \in \FF_p^{l_i}}E^{(\mu_i)}_{L_i|X_i^n}(a_i,m_i|x_i^n)\bigg] > 1\Bigg) \leq  \epsilon_{c_i}
\end{align}
\end{lem}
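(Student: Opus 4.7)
The plan is to view
\[
s_i^{(\mu_i)}(x_i^n) := \sum_{a_i \in \FF_p^{k},\,m_i \in \FF_p^{l_i}} E^{(\mu_i)}_{L_i|X_i^n}(a_i,m_i|x_i^n)
\]
as a sum of pairwise independent nonnegative random variables over the UCC ensemble, to show it concentrates strictly below $1$ by Chebyshev's inequality, and then to average the resulting tail bound against $p^n_{X_i}$. Since the distribution of $s_i^{(\mu_i)}(x_i^n)$ is the same for every $\mu_i$ (the random shifts $h_i^{(\mu_i)}$ are statistically identical across $\mu_i$), the outer average over $\mu_i$ leaves the probability unchanged. Moreover, atypical $x_i^n \notin T_{\delta}(X_i)$ contribute at most $p^n_{X_i}(T_{\delta}(X_i)^c)$ via the trivial bound $\PP(\cdot)\leq 1$, and this vanishes by the weak law of large numbers. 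Hence it suffices to establish an exponentially small bound on $\PP(s_i^{(\mu_i)}(x_i^n) > 1)$ uniform in $x_i^n \in T_{\delta}(X_i)$.

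First I would collapse the inner summation using the indicator $\mathbbm{1}_{\{\mathtt{w}_i(a_i,m_i,\mu_i)=w_i^n\}}$ to write
\[
E^{(\mu_i)}_{L_i|X_i^n}(a_i,m_i|x_i^n) = \frac{p^n\,p^n_{W_i|X_i}(\mathtt{w}_i(a_i,m_i,\mu_i)|x_i^n)}{2^{nS_i}(1+\eta)}\,\mathbbm{1}_{\{\mathtt{w}_i(a_i,m_i,\mu_i)\in T_{\delta}(W_i|x_i^n)\}}.
\]
Since each codeword $\mathtt{w}_i(a_i,m_i,\mu_i) = a_i G + h_i^{(\mu_i)}(m_i)$ is marginally uniform over $\FF_p^n$, direct computation gives $\EE[E^{(\mu_i)}_{L_i|X_i^n}(a_i,m_i|x_i^n)] = \PP_{W_i|X_i}(T_{\delta}(W_i|x_i^n)\,|\,x_i^n)/[2^{nS_i}(1+\eta)]$, and summing over the $2^{nS_i}$ codeword indices yields $\EE[s_i^{(\mu_i)}(x_i^n)] \leq (1+\eta)^{-1} < 1$. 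Consequently, the event $\{s_i^{(\mu_i)}(x_i^n) > 1\}$ requires a deviation of at least $\eta/(1+\eta)$ above the mean.

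For the second moment, the pairwise independence of UCC codewords (as recalled in the codebook description) implies that $\mathrm{Var}(s_i^{(\mu_i)}(x_i^n))$ decomposes into a sum of individual variances, each bounded by the corresponding second moment. Applying the typicality-based upper bound $p^n_{W_i|X_i}(w^n|x_i^n)\leq 2^{-n(H(W_i|X_i)-\delta')}$ on $T_{\delta}(W_i|x_i^n)$ gives a per-index second moment of order $p^n\,2^{-n(H(W_i|X_i)-\delta')}/[2^{nS_i}(1+\eta)]^2$, and summing over the $2^{nS_i}$ indices produces a variance bounded by $(1+\eta)^{-2}\,2^{-n\gamma}$ for some $\gamma>0$, provided the rate hypothesis $S_i \geq I(X_i;W_i)+\delta_{c_i}$ is invoked together with the slack parameters absorbed into $\delta_{c_i}$. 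Chebyshev's inequality then yields $\PP(s_i^{(\mu_i)}(x_i^n) > 1) \leq (1+\eta)^2\eta^{-2}\,\mathrm{Var}(s_i^{(\mu_i)}(x_i^n))$, which is the desired exponentially small, uniform tail bound, and averaging over $x_i^n$ and $\mu_i$ produces $\epsilon_{c_i}$.

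The main obstacle is precisely this variance step: because the UCC codewords are only pairwise independent (unlike an IID codebook), Chernoff-type concentration is unavailable and the full exponential decay must be extracted from the second moment alone. Carefully tracking the interplay between the $p^n$ normalization factor (which compensates for the uniform codeword measure on $\FF_p^n$), the conditional-typicality upper bound on $p^n_{W_i|X_i}$, and the rate parameter $S_i$ is essential to match the derived exponent with the mutual-information threshold stated in the hypothesis.
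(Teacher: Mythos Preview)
Your proposal is correct and follows essentially the same route as the paper's proof: both restrict to typical $x_i^n$, show the mean of $s_i^{(\mu_i)}(x_i^n)$ sits at $(1+\eta)^{-1}$, bound the variance using pairwise independence of the UCC codewords together with typicality estimates on $p^n_{W_i|X_i}$, and conclude via a second-moment tail inequality. The only cosmetic difference is that the paper works with the rescaled variable $Z^{(\mu)}_{x^n}=p^n_{X}(x^n)\,s^{(\mu)}(x^n)$ and applies Markov's inequality to $|Z-\EE Z|$ followed by Jensen (yielding a $\sqrt{\mathrm{Var}}$ bound), whereas you apply Chebyshev directly to $s$; both produce the required decay under the stated rate hypothesis.
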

\begin{proof}
 The proof is provided in Appendix \ref{proof_lem:NotaPMFBound}.
\end{proof}
\begin{proposition}\label{prop:Lemma for Tilde_S}
There exist functions  $\epsilon_{\widetilde{S}}(\delta), $ and $\delta_{\widetilde{S}}(\delta)$, 
such that for  all sufficiently small $\delta$ and sufficiently large $n$, we have $\EE[\widetilde{S}] \leq\epsilon_{\widetilde{S}}(\delta) $, if  $S_1 >  I(X_1;W_1) - H(W_1)  +\log{p} + \delta_{\widetilde{S}} $ and $S_2 > I(X_2;W_2) - H(W_2) + \log{p} +\delta_{\widetilde{S}},$ where $\epsilon_{\widetilde{S}}, \delta_{\widetilde{S}} \searrow 0$ as $\delta \searrow 0$. 
\end{proposition}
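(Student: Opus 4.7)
The plan is to reduce $\EE[\widetilde{S}]$ to a per-encoder soft-covering quantity and then control that quantity through a first/second-moment analysis of $s_i^{(\mu_i)}$ on the UCC ensemble. Since every summand inside $\widetilde{S}$ is non-negative, the absolute value may be dropped; after summing $p^{(\mu)}_{Y^n|\underline{M}}$ over $y^n$ to one and bounding the event $\{m_1=0\}\cup\{m_2=0\}$ by the union bound, one obtains
\[
\widetilde{S} \leq \sum_{i\in\{1,2\}}\frac{1}{2^{nC_i}}\sum_{\mu_i}\sum_{x_i^n} p^n_{X_i}(x_i^n)\, p^{(\mu_i)}_{M_i|X_i^n}(0|x_i^n),
\]
so it suffices to show that each single-user summand has vanishing expectation under the random coset code.

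\textbf{Case decomposition.} Using the piecewise definition of $p^{(\mu_i)}_{M_i|X_i^n}(0|x_i^n)$, I split each single-user term into contributions from (i) $x_i^n\notin T_\delta(X_i)$; (ii) $x_i^n\in T_\delta(X_i)$ with $s_i^{(\mu_i)}(x_i^n)>1$; and (iii) $x_i^n\in T_\delta(X_i)$ with $s_i^{(\mu_i)}(x_i^n)\in[0,1]$, which contributes $1-s_i^{(\mu_i)}(x_i^n)$. Piece (i) is bounded by $\PP(X_i^n\notin T_\delta(X_i))$ and vanishes by the weak AEP. Piece (ii) is bounded by $\epsilon_{c_i}$ via Lemma~\ref{lem:NotaPMFBound}, since $H(W_i)\leq\log p$ ensures that the lemma's hypothesis $S_i\geq I(X_i;W_i)+\delta_{c_i}$ is implied by the proposition's hypothesis.

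\textbf{Main term and obstacle.} The core step is piece (iii). Using the pointwise bound $(1-s)\mathbbm{1}_{\{s\in[0,1]\}}\leq (1-\EE[s])_+ + |s-\EE[s]|$ together with Cauchy-Schwarz, it suffices to show $\EE[s_i^{(\mu_i)}(x_i^n)]\to 1$ and $\mathrm{Var}(s_i^{(\mu_i)}(x_i^n))\to 0$ uniformly for $x_i^n\in T_\delta(X_i)$. Because each codeword $W_i(a_i,m_i,\mu_i)$ is uniform on $\FF_p^n$, a direct calculation gives
\[
\EE[s_i^{(\mu_i)}(x_i^n)] = \frac{p^{k+l_i}}{2^{nS_i}(1+\eta)}\, \PP(W_i^n\in T_\delta(W_i|x_i^n)\mid X_i^n=x_i^n),
\]
which tends to $1$ once we match $p^{k+l_i}=2^{nS_i}$ and let $\eta\searrow 0$ with $n$, by the conditional typicality lemma. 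The main obstacle is the variance estimate, because the UCC codewords are only pairwise independent and so a Chernoff bound is unavailable; one must proceed via Chebyshev. Pairwise independence implies $\mathrm{Var}(s_i^{(\mu_i)})$ equals $p^{k+l_i}$ times the variance of a single summand, and after substituting $p^n_{W_i|X_i}(w|x_i^n)\doteq 2^{-nH(W_i|X_i)}$ for $w\in T_\delta(W_i|x_i^n)$, this variance is bounded by a quantity of order $2^{n(\log p - H(W_i|X_i) - S_i)}/(1+\eta)^2$. This vanishes when $S_i>\log p - H(W_i|X_i) = I(X_i;W_i)-H(W_i)+\log p$, which is precisely the proposition's hypothesis. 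Integrating the resulting uniform-in-$x_i^n$ bounds against $p^n_{X_i}$ and averaging over $\mu_i$ finally yields $\EE[\widetilde{S}]\leq \epsilon_{\widetilde{S}}(\delta)$.
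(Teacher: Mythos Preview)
Your proposal is correct and follows essentially the same route as the paper: both reduce $\widetilde{S}$ to the single-user quantity $\sum_{x_i^n}p^n_{X_i}(x_i^n)\,p^{(\mu_i)}_{M_i|X_i^n}(0|x_i^n)$, split into the three cases (atypical $x_i^n$, $s_i^{(\mu_i)}>1$, $s_i^{(\mu_i)}\in[0,1]$), handle the first by AEP and the second by Lemma~\ref{lem:NotaPMFBound}. The only cosmetic difference is in case (iii): the paper bounds $(1-s_i)\mathbbm{1}_{\{s_i\le1\}}$ by $|1-s_i|$, rewrites it as a total-variation between $p^n_{X_i}$ and a codebook-averaged approximation, and then invokes the change-of-measure soft-covering Lemma~\ref{lem:changeMeasureCoveringLemma}; you instead bound it by $(1-\EE[s_i])_+ + |s_i-\EE[s_i]|$ and control mean and variance directly via pairwise independence. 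Since the proof of Lemma~\ref{lem:changeMeasureCoveringLemma} is itself a second-moment argument (cf.\ \cite[Lemma~19]{cuff2009communication}), the two treatments are equivalent in substance—your version simply unpacks the lemma in place and arrives at the identical rate threshold $S_i>\log p - H(W_i|X_i)$.
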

\begin{proof}
The proof is provided in Appendix \ref{proof_prop:Lemma for Tilde_S}
\end{proof}

Now we move on to removing from $S$ the error that is induced due to binning.

\noindent{\bf Step 2: Error caused by binning}\\
Note that $S$ can be simplified using the definitions of $P_{M_1|X_1^n}^{(\mu_1)}(\cdot|\cdot)$, $P_{M_2|X_2^n}^{(\mu_2)}(\cdot|\cdot)$, and $p^{(\mu) } _ { Y^ { n } 
|\underline{M}}(y^{n}|\underline{m}) $ as
\begin{align}
    S \deq & \sum_{\underline{x}^n,y^n}\Bigg|p^n_{\underline{X}Y}(\underline{x}^n,y^n) - 
    \sum_{\substack{ \mu_1,\mu_2  }}\sum_{\substack{ m_1 > 0, \\m_2>0  }}\sum_{\substack{a_1 \in \FF_p^k\\a_2 \in \FF_p^k}}\sum_{\substack{w_1,w_2 \in \FF_p^n}} \frac{p^{n}_{\underline{X}}(\underline{x}^n)}{2^{n(C_1+C_2)}}E^{(\mu_1)}_{W^n_1|X_1^{n}}(w^n_1|x_1^{n})E^{(\mu_2)}_{W^n_2|X_2^{n}}\!(w^n_2|x_2^{n}) \nonumber \\ & \hspace{150pt}
     \mathbbm{1}_{\left\{ \UCodeword = w_1^n\right\}}\mathbbm{1}_{\left\{ \VCodeword = w_1^n\right\}}
p_{Y|Z}^n(y^{n}|f^{(\mu)}(m_1,m_2))\Bigg|,\nonumber
\end{align}
where  $E^{(\mu_i)}_{W_i^n|X_i^n}(w_i^n|x_i^n)$ is defined as 
\begin{align}
    E^{(\mu_i)}_{W_i^n|X_i^n}(w_i^n|x_i^n)  & \deq  \frac{p^n}{2^{nS_i}(1+\eta)}{p_{W_i|X_i}^n(w_i^n|x_i^n)} \mathbbm{1}_{\left\{w_i^n \in T_{\delta}(W_i|x_i^n)\right\}} \11_{\{s_i^{(\mu_i)}(x_i^n) \leq 1\}}, \quad \text{ for i $\in \{1,2\}$ } .
    \nonumber
\end{align} 
Further by defining $\gammaCoeff$ and $\zetaCoeff$ as 
\begin{align}
    \gammaCoeff & \deq |\{\UCodeword = w_1^n\} | = \sum_{m_1 > 0}\sum_{a_1 \in \FF_{q}^k} \11_{\{\UCodeword = w_1^n\}} \quad \text{ and } \nonumber \\
    \zetaCoeff & \deq |\{\VCodeword = w_2^n\} | = \sum_{m_2 > 0}\sum_{a_2 \in \FF_{q}^k} \11_{\{\VCodeword = w_2^n\}}.
\end{align}we bound $S$ using triangle inequality as $S \leq S_1 + S_2$, where
\begin{align}
    S_1 &\deq \sum_{\underline{x}^n,y^n}\Bigg|p^n_{\underline{X}Y}(\underline{x}^n,y^n) - \sum_{\substack{ \mu_1,\mu_2  }}\sum_{\substack{w_1^n,w_2^n \in \FF_p^n}} \frac{p^{n}_{\underline{X}}(\underline{x}^n)}{2^{n(C_1+C_2)}} \gammaCoeff \zetaCoeff
    \nonumber \\
     &\hspace{170pt}E^{(\mu_1)}_{W^n_1|X_1^{n}}(w^n_1|x_1^{n})E^{(\mu_2)}_{W^n_2|X_2^{n}}(w^n_2|x_2^{n}
)p^{n } _ {Y|Z}(y^{n}|w_1^n\oplus w_2^n)\Bigg|,\nonumber \\
S_2 &\deq \sum_{\underline{x}^n,y^n}
     \sum_{\substack{ \mu_1,\mu_2  }}\sum_{\substack{ m_1 > 0, \\m_2>0  }}\sum_{\substack{a_1 \in \FF_p^k\\a_2 \in \FF_p^k}} \sum_{\substack{w_1^n,w_2^n \in \FF_p^n}} \!\frac{p^{n}_{\underline{X}}(\underline{x}^n)}{2^{n(C_1+C_2)}} E^{(\mu_1)}_{W^n_1|X_1^{n}}(w^n_1|x_1^{n})E^{(\mu_2)}_{W^n_2|X_2^{n}}(w^n_2|x_2^{n}
) \nonumber \\ &\hspace{80pt} \mathbbm{1}_{\left\{ \UCodeword = w_1^n\right\}}\mathbbm{1}_{\left\{ \VCodeword = w_1^n\right\}} \Bigg|p^n_{Y|Z}(y^{n}|w_1^n\oplus w_2^n)\! -\! p^n_{Y|Z}(y^{n}|f^{(\mu)}(m_1,\!m_2))\Bigg|.\nonumber
\end{align}

To bound the term corresponding to $S_2$, we provide the following proposition.
\begin{proposition}[Mutual Packing]\label{prop:Lemma for S_2}
There exist  $\epsilon_{S_{2}}(\delta),$ such that for  all sufficiently small $\delta$ and sufficiently large $n$, we have $\EE\left[{S}_2\right] \leq \epsilon_{{S_2}}(\delta) $, if $ S_{1} - R_1 < \log{p} - H(Z)$, or equivalently, $ S_{2} - R_2 < \log{p} - H(Z)$,  where  $\epsilon_{{S}_2} \searrow 0$ as $\delta \searrow 0$.
\end{proposition}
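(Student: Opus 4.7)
The plan is to reduce $S_{2}$ to a decoder-error probability and then control that probability via the algebraic structure of the UCCs. Invoking the triangle inequality $\sum_{y^n}|p^n_{Y|Z}(y^n|z) - p^n_{Y|Z}(y^n|z')| \le 2\,\11\{z \neq z'\}$ collapses the inner $y^n$ sum, so $S_2$ is upper bounded (up to the factor $2$) by the weighted indicator $\11\{w_1^n \oplus w_2^n \neq f^{(\mu)}(m_1,m_2)\}$. By the definition of $f^{(\mu)}$, this mismatch occurs in one of two scenarios: (a) $w_1^n \oplus w_2^n \notin \mathcal{T}_{\hat\delta}^{(n)}(Z)$, forcing the decoder to output $w_0^n$; or (b) $w_1^n \oplus w_2^n \in \mathcal{T}_{\hat\delta}^{(n)}(Z)$ but there exists some $\Delta \in \FF_p^k \setminus \{0\}$ with $\Delta G \oplus w_1^n \oplus w_2^n \in \mathcal{T}_{\hat\delta}^{(n)}(Z)$, destroying the uniqueness of $D^{(\mu)}_{m_1,m_2}$.

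For sub-event (a), the encoder $E^{(\mu_i)}_{W_i^n|X_i^n}$ samples $w_i^n$ essentially from $p^n_{W_i|X_i}(\cdot|x_i^n)$ restricted to $T_\delta(W_i|x_i^n)$. Using that $(x_1^n, x_2^n)$ is $\delta$-typical with high $p^n_{\ulineX}$-probability and the target Markov chain $W_1 - X_1 X_2 - W_2$ (which factorizes $p_{W_1 W_2 | X_1 X_2} = p_{W_1|X_1} p_{W_2|X_2}$), the conditional-typicality lemma yields joint $\delta$-typicality of $(w_1^n, w_2^n, x_1^n, x_2^n)$ with high probability; the choice $\hat\delta = p\delta$ then absorbs the slack induced by the addition $Z = W_1 \oplus W_2$, so $w_1^n \oplus w_2^n \in \mathcal{T}_{\hat\delta}^{(n)}(Z)$ w.h.p.\ and scenario (a) contributes $o(1)$. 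For sub-event (b), I would take expectation over the random UCCs. Since the shifts $h_1^{(\mu_1)}(m_1), h_2^{(\mu_2)}(m_2)$ are drawn independently and uniformly, and independently of $G$, the marginal joint of $(\UCodeword, \VCodeword)$ is uniform on $\FF_p^n \times \FF_p^n$, giving $\EE[\11_{\UCodeword = w_1^n}\,\11_{\VCodeword = w_2^n}] = p^{-2n}$; moreover, conditioning on these two matching events leaves the posterior of $G$ uniform on $\FF_p^{k \times n}$, since for every $G$ there is a unique $(h_1,h_2)$ satisfying both constraints. Consequently, for each nonzero $\Delta$, $\Delta G$ is uniform on $\FF_p^n$, so $\PP[\Delta G \oplus w_1^n \oplus w_2^n \in \mathcal{T}_{\hat\delta}^{(n)}(Z) \mid \UCodeword, \VCodeword] = |\mathcal{T}_{\hat\delta}^{(n)}(Z)|/p^n$. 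A union bound over the $p^k - 1$ nonzero $\Delta$'s then yields an expected count of extraneous typical coset elements bounded by $2^{n[(S_i - R_i) + H(Z) + \epsilon - \log p]}$, which vanishes precisely when $S_i - R_i < \log p - H(Z)$. Summing out the remaining sub-probability weights $E^{(\mu_i)}$ and $p^n_{\ulineX}$ delivers $\EE[S_2] \le \epsilon_{S_2}(\delta) \to 0$.

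The main obstacle lies in sub-event (b): unlike in IID random coding, one cannot appeal to pairwise independence between the transmitted sum $w_1^n \oplus w_2^n$ and a competing coset element, because both share the same generator matrix $G$. The crucial step is the posterior-uniformity argument for $G$: after fixing $\UCodeword = w_1^n$ and $\VCodeword = w_2^n$, one must verify that enough randomness remains in $G$ to render $\Delta G$ uniform for every nonzero $\Delta \in \FF_p^k$. The UCC construction makes this possible precisely because the shifts $h_1^{(\mu_1)}(m_1)$ and $h_2^{(\mu_2)}(m_2)$, drawn independently of $G$ and of each other, serve as one-time pads that decouple $G$ from the two matching events. It is this decoupling that lets the extraneous-typical bound be governed by the coarse-code rate $(k/n)\log p = S_i - R_i$ rather than by the full UCC rate $S_i$, and it is what ultimately produces the sum-rate improvement over the unstructured bound in \cite{atif2020ISITsource}.
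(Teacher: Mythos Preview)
Your proposal is correct and follows essentially the same route as the paper: collapse the $y^n$-sum via the trivial total-variation bound, then control the resulting decoder-mismatch indicator by a union bound over competing coset elements $\tilde a\neq a_1\oplus a_2$, using that $G,h_1^{(\mu_1)}(m_1),h_2^{(\mu_2)}(m_2)$ are independent and uniform so that $\PP[\UCodeword=w_1^n,\VCodeword=w_2^n,\;\tilde aG+h_1+h_2=\tilde w^n]=p^{-3n}$. Your explicit treatment of sub-event~(a) (the case $w_1^n\oplus w_2^n\notin\mathcal{T}_{\hat\delta}^{(n)}(Z)$, which the paper's indicator does not cover) is a welcome refinement, and your phrasing of the key step as ``posterior uniformity of $G$ given the two matching events'' is exactly the content of the paper's three-factor computation.
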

\begin{proof}
The proof is provided in Appendix  \ref{proof_prop:Lemma for S_2}.
\end{proof}
\noindent Now, we move on to analyzing the term corresponding to $S_1$.\\
\noindent\textbf{Step 3: Term concerning Alice's encoding}\\
\noindent In this step, we separately analyze the action of the two encoders in approximating the product distribution $p_{\underline{X}Y}^n(\cdot)$. For that, we split $S_{1}$ as $ S_{1} \leq Q_1 + Q_2 $, where
\begin{align}
    Q_1 \deq &  \sum_{\underline{x}^n,y^n}\Bigg|p^n_{\underline{X}Y}(\underline{x}^n,y^n) - \frac{1}{2^{nC_1}}\sum_{\substack{ \mu_1 }}\sum_{\substack{w_1^n,w_2^n \in \FF_p^n}}{p^{n}_{\underline{X}}(\underline{x}^n)}\gammaCoeff E^{(\mu_1)}_{W^n_1|X_1^{n}}(w^n_1|x_1^{n})\nonumber \\ & \hspace{3in} p^n_{W^n_2|X_2^{n}}(w^n_2|x_2^{n}
)p^n_ { Y|{Z}}(y^{n}|w_1^n+w_2^n)\Bigg|,\nonumber \\
Q_2 \deq& \!\! \sum_{\underline{x}^n,y^n}\Bigg|\frac{1}{2^{nC_1}}\sum_{\substack{ \mu_1 }}\sum_{\substack{w_1^n,w_2^n \in \FF_p^n}} {p^{n}_{\underline{X}}(\underline{x}^n)}\gammaCoeff E^{(\mu_1)}_{W^n_1|X_1^{n}}(w^n_1|x_1^{n}) \nonumber \\ & \hspace{80pt}\left(p^n_{W^n_2|X_2^{n}}(w^n_2|x_2^{n})\! - \zetaCoeff E^{(\mu_2)}_{W^n_2|X_2^{n}}(w^n_2|x_2^{n})\right)
p^n_ { Y|{Z}}(y^{n}|w_1^n\!\!+\!\!w_2^n) \Bigg|.\nonumber 
\end{align}
With this partition, the terms within the trace norm of $ Q_1 $ differ only in the action of Alice's encoder. And similarly, the terms within the norm of $ Q_2 $ differ only in the action of Bob's encoder. Showing that these two terms are small forms a major portion of the achievability proof. \\
\noindent{\textbf{Analysis of $ Q_1$}:}  
To prove $ Q_1 $ is small, we characterize the rate constraints which ensure that an upper bound to $ Q_1 $ can be made to vanish in an expected sense. In addition, this upper bound becomes useful in obtaining a single-letter characterization for the rate needed to make the term corresponding to $ Q_2 $ vanish. For this, we define $ J $ as
\begin{align}
    J \deq &\sum_{\underline{x}^n,w_2^n,y^n}\Bigg|p^n_{\underline{X}W_2Y}(\underline{x}^n,y^n) - \frac{1}{2^{nC_1}}\sum_{\substack{ \mu_1}}\sum_{\substack{w_1^n}} {p^{n}_{\underline{X}}(\underline{x}^n)}\gammaCoeff E^{(\mu_1)}_{W^n_1|X_1^{n}}(w^n_1|x_1^{n})\nonumber \\
     &\hspace{3in}p^n_{W^n_2|X_2^{n}}(w^n_2|x_2^{n})
p^n_ { Y|{Z}}(y^{n}|w_1^n+w_2^n)\Bigg|.\nonumber 
\end{align}
By again using triangle inequality we obtain $J \leq J_1 + J_2$, where
\begin{align*}
    J_1 & \deq \sum_{\underline{x}^n,w_2^n,y^n}\Bigg|p^n_{\underline{X}W_2Y}(\underline{x}^n,y^n) - \frac{1}{2^{nC_1}}\sum_{\substack{ \mu_1}}\sum_{\substack{w_1^n}} {p^{n}_{\underline{X}}(\underline{x}^n)}\gammaCoeff \bar{E}^{(\mu_1)}_{W^n_1|X_1^{n}}(w^n_1|x_1^{n})\nonumber \\
     &\hspace{3in}p^n_{W^n_2|X_2^{n}}(w^n_2|x_2^{n})
p^n_ { Y|{Z}}(y^{n}|w_1^n+w_2^n)\Bigg| \\
    J_2 & \deq \sum_{\underline{x}^n,w_2^n,y^n}\Bigg|\frac{1}{2^{nC_1}}\sum_{\substack{ \mu_1}}\sum_{\substack{w_1^n}} {p^{n}_{\underline{X}}(\underline{x}^n)}\gammaCoeff \left(\bar{E}^{(\mu_1)}_{W^n_1|X_1^{n}}(w^n_1|x_1^{n}) - E^{(\mu_1)}_{W^n_1|X_1^{n}}(w^n_1|x_1^{n})\right)\nonumber \\
     &\hspace{3in}p^n_{W^n_2|X_2^{n}}(w^n_2|x_2^{n})
p^n_ { Y|{Z}}(y^{n}|w_1^n+w_2^n)\Bigg|
\end{align*}
where $\bar{E}^{(\mu_1)}_{W^n_1|X_1^{n}}(\cdot|\cdot)$ is defined as
\begin{align}
    E^{(\mu_1)}_{W_1^n|X_1^n}(w_1^n|x_1^n)  & \deq  \frac{p^n}{2^{nS_1}(1+\eta)}{p_{W_1|X_1}^n(w_1^n|x_1^n)} \mathbbm{1}_{\left\{w_1^n \in T_{\delta}(W_1|x_1^n)\right\}}.
\end{align}
To prove the term corresponding to $J_1$ is small, consider the following proposition.
\begin{proposition}
\label{prop:Lemma for J1}
There exist  $\epsilon_{J_1}(\delta), \delta_{J_1}(\delta)$ such that for  all sufficiently small $\delta$ and sufficiently large $n$, we have $\EE\left[J_1\right] \leq \epsilon_{J_1}$ if $S_1 + C_1 \geq I(W_1;X_1X_2ZW_2) +\log{p} - H(W_1) + \delta_{J_1}$, where  $\epsilon_{{J_1}}, \delta_{{J_1}} \searrow 0$ as $\delta \searrow 0$.
\end{proposition}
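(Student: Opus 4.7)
The strategy is to recognize $J_1$ as a direct application of the change-of-measure soft-covering result (Lemma~\ref{lem:changeMeasureCoveringLemma}), with the source alphabet (playing the role of $X$ in the lemma) being $W_1$ sampled uniformly from $\FF_p$ and the target output (playing the role of $Y$) being the tuple $(\underline{X},W_2,Y)$. The plan is: (i) rewrite the inner mixture of $J_1$ to expose the importance-sampling structure, (ii) absorb the typicality indicator and the $(1+\eta)$ scaling into vanishing perturbations, (iii) invoke the lemma, and (iv) translate the resulting single-letter rate condition into the form claimed in the proposition.

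Substituting $\gammaCoeff=\sum_{m_1>0,\,a_1}\11_{\{\mathtt{w}_1(a_1,m_1,\mu_1)=w_1^n\}}$ and collapsing the sum over $w_1^n$, the mixture being compared with $p^n_{\underline{X}W_2Y}$ in $J_1$ becomes
\begin{align*}
\frac{1}{M}\!\!\sum_{(\mu_1,m_1>0,a_1)}\!\!\! &\frac{p^n_{W_1|X_1}(\mathtt{w}_1(a_1,m_1,\mu_1)|x_1^n)}{q_{W_1}^n(\mathtt{w}_1(a_1,m_1,\mu_1))}\,\frac{\11_{\{\mathtt{w}_1(a_1,m_1,\mu_1)\in T_{\delta}(W_1|x_1^n)\}}}{1+\eta}\\
&\quad \times\, p^n_{\underline{X}}(\underline{x}^n)\,p^n_{W_2|X_2}(w_2^n|x_2^n)\,p^n_{Y|Z}(y^n|\mathtt{w}_1(a_1,m_1,\mu_1)\oplus w_2^n),
\end{align*}
where $M=2^{n(S_1+C_1)}$ is the total number of codewords across all common-randomness instances $\mu_1$, and $q_{W_1}$ is the uniform PMF on $\FF_p$, so that $q_{W_1}^n(w_1^n)=p^{-n}$. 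The codewords $\{\mathtt{w}_1(a_1,m_1,\mu_1)\}$ of the UCC ensemble are pairwise independent and uniformly distributed on $\FF_p^n$, which is exactly the hypothesis of Lemma~\ref{lem:changeMeasureCoveringLemma}.

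Next I would dispose of the two nuisance factors. The $1/(1+\eta)$ prefactor is a constant scaling; removing it alters the mixture by $O(\eta)$ in $\ell_1$, which is driven to zero by choosing $\eta\searrow 0$. The typicality indicator removes at most $\epsilon_\delta$ of the mass of $p^n_{W_1|X_1}(\cdot|x_1^n)$ uniformly over conditionally $\delta$-typical $x_1^n$ (by the conditional typicality lemma), and contributes a further vanishing perturbation once averaged over $X_1^n\sim p^n_{X_1}$ via $\PP(X_1^n\notin T_\delta)\to 0$. Absorbing both into $\epsilon_{J_1}(\delta)$, the core quantity is precisely the expected $\ell_1$ distance controlled by Lemma~\ref{lem:changeMeasureCoveringLemma} applied with target joint PMF $p_{W_1\underline{X}W_2Y}$ and sampling PMF $q_{W_1}$. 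The lemma then gives $\EE[J_1]\to 0$ provided
\[
S_1+C_1\ \geq\ I(W_1;\underline{X}W_2Y)+\log p-H(W_1)+\delta_{J_1}.
\]

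Finally I would upgrade $Y$ to $Z$ in the mutual-information term. Since $Z=W_1\oplus W_2$ is a deterministic function of $(W_1,W_2)$, and since $Y-Z-(W_1,W_2,\underline{X})$ is a Markov chain (the decoder depends on the encoders only through $Z$), one has
\[
I(W_1;\underline{X}W_2Y)\ \leq\ I(W_1;\underline{X}W_2ZY)\ =\ I(W_1;\underline{X}W_2Z)+I(W_1;Y|\underline{X}W_2Z)\ =\ I(W_1;X_1X_2ZW_2),
\]
which yields the sufficient condition stated in the proposition. The principal technical hurdle lies not in the single-letter computation but in justifying Lemma~\ref{lem:changeMeasureCoveringLemma} itself under only pairwise independence of the UCC codewords: its proof must rely on a second-moment (Chebyshev) computation rather than Chernoff concentration, and threading the ``$m_1>0$'' restriction together with the $(1+\eta)$ normalization cleanly through that variance bookkeeping is the delicate-but-routine step the appendix must supply.
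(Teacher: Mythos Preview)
Your approach is essentially the same as the paper's: both rewrite $J_1$ to expose the importance-sampling structure with $q_{W_1}$ uniform on $\FF_p$, peel off the typicality indicator and the $(1+\eta)$ scaling as vanishing $\ell_1$ perturbations, and then invoke Lemma~\ref{lem:changeMeasureCoveringLemma} on the pairwise-independent UCC ensemble. The only cosmetic difference is that the paper makes the perturbation split explicit via a triangle inequality $J_1\le J_{11}+J_{12}$ (with $J_{11}$ the indicator-free term fed to the lemma and $J_{12}$ the non-typical residual bounded in expectation), whereas you fold these into $\epsilon_{J_1}(\delta)$ in words; your final $Y\to Z$ upgrade is an extra bookkeeping step the paper skips, simply stating the $Z$ form directly.
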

\begin{proof}
    The proof is provided in Appendix \ref{proof_prop:Lemma for J1}.
\end{proof}
\noindent Now, consider the term corresponding to $J_2$. This can be simplified as
\begin{align*}
    J_2 &= \frac{1}{2^{nC_1}}\sum_{{x}_1^n} \sum_{\substack{ \mu_1}} {p^{n}_{\underline{X}}(\underline{x}^n)}\left(\sum_{\substack{w_1^n}}\gammaCoeff \bar{E}^{(\mu_1)}_{W^n_1|X_1^{n}}(w^n_1|x_1^{n})\right)\11_{\{s_1^{(\mu_1)}(x^n_1) > 1\}} \\
    & \leq \frac{1}{2^{nC_1}}\sum_{{x}_1^n} \sum_{\substack{ \mu_1}} {p^{n}_{\underline{X}}(\underline{x}^n)} \EE\left[\sum_{\substack{w_1^n}}\gammaCoeff \bar{E}^{(\mu_1)}_{W^n_1|X_1^{n}}(w^n_1|x_1^{n})\right]\11_{\{s_1^{(\mu_1)}(x^n_1) > 1\}}  \\
    & \hspace{10pt} +  \frac{1}{2^{nC_1}}\sum_{{x}_1^n} \sum_{\substack{ \mu_1}} {p^{n}_{\underline{X}}(\underline{x}^n)} \left|\sum_{\substack{w_1^n}}\gammaCoeff \bar{E}^{(\mu_1)}_{W^n_1|X_1^{n}}(w^n_1|x_1^{n}) - \EE\left[\sum_{\substack{w_1^n}}\gammaCoeff \bar{E}^{(\mu_1)}_{W^n_1|X_1^{n}}(w^n_1|x_1^{n})\right]\right|  \\
    & \leq \frac{1}{2^{nC_1}}\sum_{{x}_1^n} \sum_{\substack{ \mu_1}} {p^{n}_{\underline{X}}(\underline{x}^n)}\EE\left[\sum_{\substack{w_1^n}}\gammaCoeff \bar{E}^{(\mu_1)}_{W^n_1|X_1^{n}}(w^n_1|x_1^{n})\right] \11_{\{s_1^{(\mu_1)}(x^n_1) > 1\}}  \\
     & \hspace{10pt} +  \frac{1}{2^{nC_1}}\sum_{{x}_1^n} \sum_{\substack{ \mu_1}} {p^{n}_{\underline{X}}(\underline{x}^n)} \left|\sum_{\substack{w_1^n}}\gammaCoeff \bar{E}^{(\mu_1)}_{W^n_1|X_1^{n}}(w^n_1|x_1^{n}) - \EE\left[\sum_{\substack{w_1^n}}\gammaCoeff \bar{E}^{(\mu_1)}_{W^n_1|X_1^{n}}(w^n_1|x_1^{n})\right]\right| \\
    & \leq H_0 + H_1 + \epsilon'',
    \end{align*}
   for
    \begin{align*}
    H_0 & \deq \frac{1}{2^{nC_1}}\sum_{{x}_1^n} \sum_{\substack{ \mu_1}} {p^{n}_{\underline{X}}(\underline{x}^n)} \11_{\{s_1^{(\mu_1)}(x^n_1) > 1\}}  \\
    H_1 & \deq  \frac{1}{2^{nC_1}}\sum_{{x}_1^n} \sum_{\substack{ \mu_1}}  \left| {p^{n}_{\underline{X}}(\underline{x}^n)} - \frac{p^n}{2^{nS_1}(1+\eta)} \sum_{\substack{w_1^n}}\sum_{m_1 > 0}\sum_{a_1 \in \FF_{q}^k}  {p_{X_1W_1}^n(x_1^n,w_1^n)}\11_{\{\UCodeword = w_1^n\}}\right|
\end{align*}
where we use $ \EE\left[\sum_{\substack{w_1^n}}\gammaCoeff \bar{E}^{(\mu_1)}_{W^n_1|X_1^{n}}(w^n_1|x_1^{n})\right] \leq 1$ in defining $H_0$ and $H_1$ is obtained by adding the sequences $w_1^n \notin \TDelta(W_1)$ within the summation.
Now, we can provide an upper bound on $H_0$ and $H_1$ using the Lemmas \ref{lem:NotaPMFBound} and \ref{lem:changeMeasureCoveringLemma}, respectively, as $\EE[H_0 + H_1 ] \leq \epsilon_{H} $ if $S_1  \geq I(X_1;W_1) + \delta_{H}$. Therefore, since $Q_1 \leq J$, hence $ Q_1$, can be made arbitrarily small for sufficiently large n, if $S_1+C_1 > I(W_1;X_1X_2YW_2) - H(W_1) + \log{p} + \delta_{J}$ . Now we move on to bounding $ Q_2 $.\\
\noindent{\bf Step 4: Analysis of Bob's encoding}\\Step 3 ensured that the random variables $X_1X_2YW_2$ are close to a product PMF in total variation. In this step, we approximate the PMF of random variables $X_1X_2Y$ using the Bob's encoding rule and bound the theorem corresponding to $Q_2$. We proceed with the following proposition. 
\begin{proposition} \label{prop:Lemma for Q2}
There exist functions  $\epsilon_{Q_{2}}(\delta) $ and $\delta_{{Q}_{2}}(\delta)$, such that for  all sufficiently small $\delta$ and sufficiently large $n$, we have $\EE[{Q}_2] \leq\epsilon_{{Q_2}}$, if $ S_1+C_1\geq  I(W_1;X_1X_2YW_2) - H(W_1)_{} + \log{p} + \delta_{Q_2}$ and $ S_2+C_2 \geq  I(W_2;X_1X_2Y) - H(W_2)_{} + \log{p} + \delta_{Q_2}$, where $\epsilon_{Q_2},\delta_{{Q}_2}  \searrow 0$ as $\delta \searrow 0$.
\end{proposition}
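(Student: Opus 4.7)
The plan is to bound $\EE[Q_2]$ via a triangle-inequality decomposition that isolates three error sources: a residual from Alice's encoding already controlled in Step~3, a soft-covering error purely from Bob's code, and a cross-term handled by codebook independence. Define the partial marginal
\[
\phi(\ulinex^n,y^n,w_2^n) \define p^n_{\ulineX}(\ulinex^n)\sum_{w_1^n}p^n_{W_1|X_1}(w_1^n|x_1^n)\,p^n_{Y|Z}(y^n|w_1^n\oplus w_2^n),
\]
which satisfies $\phi(\ulinex^n,y^n,w_2^n)\,p^n_{W_2|X_2}(w_2^n|x_2^n)=p^n_{\ulineX Y W_2}(\ulinex^n,y^n,w_2^n)$, and let $\chi(w_2^n|x_2^n)\define \frac{1}{2^{nC_2}}\sum_{\mu_2}\zetaCoeff E^{(\mu_2)}_{W_2^n|X_2^n}(w_2^n|x_2^n)$ denote Bob's averaged conditional PMF. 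Observe that $Q_2=\sum_{\ulinex^n,y^n}\bigl|\sum_{w_2^n}\hat{Q}_2(\ulinex^n,y^n,w_2^n)\bigl(p^n_{W_2|X_2}-\chi\bigr)\bigr|$; inserting $\pm\sum_{w_2^n}\phi\,p^n_{W_2|X_2}$ and $\pm\sum_{w_2^n}\phi\,\chi$ under the absolute value and applying the triangle inequality gives $Q_2\leq A+B+C$ with
\begin{align*}
A &\define \sum_{\ulinex^n,y^n}\bigg|\sum_{w_2^n}(\hat{Q}_2-\phi)\,p^n_{W_2|X_2}\bigg|,\\
B &\define \sum_{\ulinex^n,y^n}\bigg|\sum_{w_2^n}\phi\,(p^n_{W_2|X_2}-\chi)\bigg|,\\
C &\define \sum_{\ulinex^n,y^n}\bigg|\sum_{w_2^n}(\phi-\hat{Q}_2)\,\chi\bigg|.
\end{align*}

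Pushing the absolute value inside the $w_2^n$ sum in $A$ and invoking the identity $\phi\,p^n_{W_2|X_2}=p^n_{\ulineX Y W_2}$ yields $A\leq J$, so $\EE[A]\leq \EE[J]$, which by Proposition~\ref{prop:Lemma for J1} combined with the change-of-measure argument already used to control $J_2$ in Step~3 is arbitrarily small under the first rate condition $S_1+C_1\geq I(W_1;X_1X_2YW_2)+\log p-H(W_1)+\delta_{Q_2}$. For the cross-term $C$, independence of Alice's and Bob's codebooks makes $\hat{Q}_2$ and $\chi$ independent random variables; a short computation exploiting the marginal uniformity of UCC codewords on $\FF_p^n$ yields $\EE[\chi(w_2^n|x_2^n)]\leq \frac{1}{1+\eta}p^n_{W_2|X_2}(w_2^n|x_2^n)$, whence $\EE[C]\leq \frac{1}{1+\eta}\EE[J]$ is again controlled by the first rate hypothesis. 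The principal term is $B$, which by $\sum_{w_2^n}\phi\,p^n_{W_2|X_2}=p^n_{\ulineX Y}$ rewrites as $B=\sum_{\ulinex^n,y^n}\bigl|p^n_{\ulineX Y}(\ulinex^n,y^n)-\sum_{w_2^n}\phi(\ulinex^n,y^n,w_2^n)\,\chi(w_2^n|x_2^n)\bigr|$. Substituting the definition of $\chi$ and again using $\phi\,p^n_{W_2|X_2}=p^n_{\ulineX Y W_2}$, the second sum equals --- up to the factor $1/(1+\eta)$ and a vanishing atypicality-truncation correction --- $\frac{1}{M}\sum_{(a_2,m_2,\mu_2)}p^n\cdot p^n_{\ulineX Y W_2}(\ulinex^n,y^n,W_2(a_2,m_2,\mu_2))$ with $M=2^{n(S_2+C_2)}$. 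Since the UCC codewords $\{W_2(a_2,m_2,\mu_2)\}$ are pairwise independent and each marginally uniform on $\FF_p^n$, the ratio $p^n_{W_2}/q^n_{W_2}=p^n\cdot p^n_{W_2}$ is the change-of-measure weight required by Lemma~\ref{lem:changeMeasureCoveringLemma}; applying that lemma with $X\leftrightarrow W_2$, $Y\leftrightarrow (\ulineX,Y)$ and $H_q(W_2)=\log p$ yields $\EE[B]\to 0$ under $S_2+C_2\geq I(W_2;X_1X_2Y)+\log p-H(W_2)+\delta_{Q_2}$, which is the second rate hypothesis.

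The main obstacle is the clean invocation of Lemma~\ref{lem:changeMeasureCoveringLemma} in bounding $B$: one must absorb the atypicality indicators $\mathbbm{1}_{w_2^n\in T_\delta(W_2|x_2^n)}$ and $\mathbbm{1}_{s_2^{(\mu_2)}(x_2^n)\le 1}$ sitting inside $\chi$ --- both contribute only vanishing corrections, the latter controlled by Lemma~\ref{lem:NotaPMFBound} --- and transfer any conditioning on $x_2^n$ through the lemma (applied pointwise for strongly typical $x_2^n$ and averaged against $p^n_{X_2}$ for the remainder). Crucially, the pairwise-independence hypothesis of Lemma~\ref{lem:changeMeasureCoveringLemma} is exactly the property enjoyed by UCC codewords, so no probabilistic tool beyond that lemma and the already-established Step~3 bounds is needed to assemble $\EE[Q_2]\leq\EE[A+B+C]\leq\epsilon_{Q_2}$.
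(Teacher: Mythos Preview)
Your decomposition $Q_2\le A+B+C$ is essentially the paper's decomposition $Q_2\le F_1+F_2$ followed by $F_1\le F_{11}+F_{12}+F_{13}+F_{14}$: your $A$ plays the role of $F_{11}$ (both reduce to $J$ under the first rate hypothesis), your $B$ corresponds to $F_{12}$ together with the typicality correction $F_{14}$ (both dispatched by Lemma~\ref{lem:changeMeasureCoveringLemma} under the second rate hypothesis), your $C$ is $F_{13}$, and the $\11\{s_2^{(\mu_2)}(x_2^n)\le 1\}$ correction you flag at the end is the paper's $F_2$, bounded by $\tilde S_2$. So the overall route is the same.

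There is, however, one genuine slip in your treatment of $C$: Alice's and Bob's codebooks are \emph{not} independent, because both UCCs are built on the \emph{same} generator matrix $G$; hence $\hat Q_2$ (a function of $G,h_1$) and $\chi$ (a function of $G,h_2$) are \emph{not} independent random variables, and you cannot simply factor $\EE[|\phi-\hat Q_2|\,\chi]$ as a product of expectations. What makes the argument go through---and what the paper actually uses---is the weaker pair of facts (i) $h_1\perp h_2\mid G$ and (ii) $\EE_{h_2\mid G}[\zetaCoeff]=2^{nS_2}/p^n$ is a constant \emph{not depending on $G$} (since each shift $h_2^{(\mu_2)}(m_2)$ is uniform on $\FF_p^n$ regardless of $G$). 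After pushing the absolute value inside the $w_2^n$-sum in $C$, one conditions on $G$, uses (i) to factor, and then (ii) makes the inner conditional expectation of (the $\bar E$-version of) $\chi$ equal to $\frac{1}{1+\eta}p^n_{W_2|X_2}\11_{T_\delta}$ independently of $G$, so the bound collapses to $\frac{1}{1+\eta}\EE[J]$ exactly as you claim. Your conclusion and rate conditions are correct; only the ``independence of codebooks'' justification must be replaced by this conditional argument. This distinction is precisely the structured-code subtlety the paper is built around, so it is worth stating accurately.
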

\begin{proof}
 The proof is provided in Appendix \ref{proof_prop:Lemma for Q_2}.
\end{proof}
Hence, in bounding the terms corresponding to $Q_1$ and $Q_2$, we have obtained the following constraints: 
\begin{align}
    S_1 + C_1 &\geq I(W_1;X_1X_2YW_2) - H(W_1)_{} + \log{p}, \nonumber \\
    S_2+C_2 &\geq  I(W_2;X_1X_2Y) - H(W_2)_{} + \log{p}. \label{SCrate1}
\end{align}
By doing an exact symmetric analysis, but by replacing the first encoder by a product distribution instead of the second encoder in $S_1$, we obtain the following constraints
\begin{align}
    S_1 + C_1 &\geq I(W_1;X_1X_2Y) - H(W_1)_{} + \log{p}, \nonumber \\
    S_2+C_2 &\geq  I(W_2;X_1X_2YW_1) - H(W_2)_{} + \log{p}. \label{SCrate2}
\end{align}
By time sharing between the above rates \eqref{SCrate1} and \eqref{SCrate2}, one can obtain the following rate constraints
\begin{align}
    S_1\! +\! C_1 & \!\geq\! I(W_1;X_1X_2Y) - H(W_1)_{} + \log{p}, \nonumber\\
    S_2\!+\!C_2 &\!\geq \! I(W_2;X_1X_2Y) - H(W_2)_{} + \log{p}, \nonumber\\
    S_1 \!+\!S_2\! +\!C_1\! +\!C_2 &\!\geq\! I(W_1W_2;X_1X_2Y)\! -\! H(W_1,W_2)\! + \!2\log{p}.  \nonumber
\end{align}

\subsection{Rate Constraints}

To sum-up, we showed that the  \eqref{eq:dist_main_lemma} holds for sufficiently large $n$ and with probability sufficiently close to 1, if the following bounds holds while incorporating the time sharing random variable $Q$ taking values over the finite set $\mathcal{Q}$\footnote{Since $Q$, the time sharing random variable is employed in the standard way we omit its discussion here.}:
\begin{align}\label{eq:rate-region_long}
    S_1 & \geq I(X_1;W_1|Q) - H(W_1|Q) + \log{p}, \nonumber \\
    S_2 & \geq I(X_2;W_2|Q) - H(W_2|Q) + \log{p}, \nonumber \\
    S_1+ C_1 & \geq I(X_1X_2Y;W_1|Q) - H(W_1|Q) + \log{p}, \nonumber \\
    S_2+ C_2 & \geq I(X_1X_2Y;W_2|Q) - H(W_2|Q) \!+\! \log{p}, \nonumber \\
    S_1 +S_2 +C_1 +C_2 &\geq I(W_1W_2;X_1X_2Y|Q) - H(W_1,\!W_2|Q)  + 2\log{p}, \nonumber \\
    S_1 - R_1 &= S_2 - R_2\leq \log{p} - H(W_1\oplus W_2|Q), \nonumber \\
    0 \leq R_1 &\leq S_1, \quad 0 \leq R_2 \leq S_2,\nonumber \\
    C_1+C_2&\leq C, \quad C \geq 0 
\end{align}

 Lastly,  we complete the proof of the theorem using the following lemma.
 \begin{lem}
 	Let $\mathcal{R}_1$ denote the set of all $(R_1,R_2,C)$ for which there exists $(S_1,S_2)$ such that the septuple $(R_1,R_2, C, S_1, S_2, {C}_1, {C}_2)$ satisfies the inequalities in \eqref{eq:rate-region_long}. Let, $\mathcal{R}_2$ denote the set of all triples $(R_1, R_2, C)$ that satisfies  the inequalities in \eqref{eq:rate-region} given in the statement of the theorem. Then, $
 	\mathcal{R}_1=\mathcal{R}_2$.
 \end{lem}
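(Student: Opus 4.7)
The plan is to establish $\mathcal{R}_1 = \mathcal{R}_2$ by performing Fourier--Motzkin elimination of the slack variables $(S_1, S_2, C_1, C_2)$ from the system \eqref{eq:rate-region_long} and rewriting the resulting bounds in the single-letter mutual-information form of \eqref{eq:rate-region} using the Markov-chain assumptions that define $\mathcal{P}(p_{X_1 X_2 Y})$.

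First, I introduce the single variable $T := S_1 - R_1 = S_2 - R_2 \in [0, \log p - H(Z|Q)]$, where $Z := W_1 \oplus W_2$, and substitute $S_i = R_i + T$ throughout \eqref{eq:rate-region_long}. Next I eliminate $C_1$ and $C_2$ by pairing their lower bounds (coming from the individual $S_i + C_i$ inequalities and the joint $S_1+S_2+C_1+C_2$ inequality) against the upper bounds $C_i \leq C - C_j$ (which follow from $C_1+C_2 \leq C$ with $C_j \geq 0$). Finally I eliminate $T$ by pairing each of its lower bounds with the upper bound $\log p - H(Z|Q)$. This projection yields the bounds
\begin{align*}
R_1 &\geq H(Z|Q) - H(W_1|X_1, Q), \\
R_2 &\geq H(Z|Q) - H(W_2|X_2, Q), \\
R_1 + C &\geq H(Z|Q) - H(W_1|\underline{X} Y, Q), \\
R_2 + C &\geq H(Z|Q) - H(W_2|\underline{X} Y, Q),
\end{align*}
together with an auxiliary sum bound $R_1 + R_2 + C \geq 2 H(Z|Q) - H(W_1, W_2|\underline{X}Y, Q)$ arising from the joint $C_1 + C_2$ inequality.

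To rewrite these in the form of \eqref{eq:rate-region}, I apply two identities: the Markov chain $W_1 - Q X_1 - Q X_2 - W_2$ gives $H(W_1|\underline{X}, W_2, Q) = H(W_1|X_1, Q)$; and the bijection $(W_1, W_2) \leftrightarrow (Z, W_2)$ over $\mathbb{F}_p^2$ yields the algebraic identity $H(W_1, W_2|Q) - H(W_2|Z, Q) = H(Z|Q)$. Combined, these give, for instance, $H(Z|Q) - H(W_1|X_1, Q) = I(X_1; W_1|W_2, Q) + I(Z; W_2|Q)$, matching the first constraint of \eqref{eq:rate-region}, and similarly for the symmetric bound on $R_2$; the $R_i + C$ bounds additionally use $\underline{X} - Q\underline{W} - Y$ to insert the $I(Y; W_i|\underline{X}, Q)$ term when expanding $H(Z|Q) - H(W_i|\underline{X}, Y, Q)$.

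It remains to reconcile the sum constraints in the two systems. The sum inequality in \eqref{eq:rate-region} is implied by the first two inequalities there together with $C \geq 0$, hence is redundant. Conversely, the Fourier--Motzkin sum bound $R_1 + R_2 + C \geq 2H(Z|Q) - H(W_1, W_2|\underline{X}Y, Q)$ is implied by adding any $R_i$ bound to the complementary $R_j + C$ bound and applying the entropy inequalities $H(W_i|X_i, Q) + H(W_j|\underline{X}Y, Q) \geq H(W_1, W_2|\underline{X}Y, Q)$ (conditioning and subadditivity). The reverse inclusion $\mathcal{R}_2 \subseteq \mathcal{R}_1$ is obtained by exhibiting explicit witnesses: set $T = \log p - H(Z|Q)$, $S_i = R_i + T$, and pick $C_1, C_2 \geq 0$ at their individual tight lower bounds with $C_1 + C_2 \leq C$, which are realizable precisely because the four inequalities in \eqref{eq:rate-region} hold. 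I expect the main obstacle to be the book-keeping of the Fourier--Motzkin combinations and, in particular, verifying that the sum bound produced by the elimination --- which is strictly stronger than the one appearing in \eqref{eq:rate-region} --- is nonetheless redundant once the four individual constraints are imposed; the identities and subadditivity above are the key tool that makes this work.
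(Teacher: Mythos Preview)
Your overall strategy---Fourier--Motzkin elimination of $(S_1,S_2,C_1,C_2)$ followed by rewriting via the Markov identities---is precisely what the paper invokes (its entire proof is the single line ``This follows from Fourier--Motzkin elimination''), and your derivation of the projected constraints and your matching of the first four of them to the first four lines of \eqref{eq:rate-region} are correct.

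The gap is in your treatment of the sum constraint. You correctly observe that the fifth line of \eqref{eq:rate-region} is redundant given the first two and $C\ge 0$. But your argument that the FM-produced bound
\[
R_1+R_2+C \;\ge\; 2H(Z\mid Q)-H(W_1,W_2\mid \underline{X}Y,Q)
\]
is implied by ``$R_i$-bound $+$ $(R_j+C)$-bound'' runs the inequality the wrong way. The relation $H(W_i\mid X_i,Q)+H(W_j\mid \underline{X}Y,Q)\ge H(W_1,W_2\mid \underline{X}Y,Q)$ is true, but after the sign change it shows that the summed individual bounds are \emph{weaker} than the FM sum bound, not stronger. In fact, using both Markov chains,
\[
\bigl[2H(Z\mid Q)-H(\underline{W}\mid \underline{X}Y,Q)\bigr]
-\bigl[2H(Z\mid Q)-H(W_1\mid X_1,Q)-H(W_2\mid X_2,Q)\bigr]
= I(Y;\underline{W}\mid \underline{X},Q)\;\ge 0,
\]
and this gap is generically positive (e.g.\ take $W_i$ a BSC-noisy copy of $X_i$ and $Y=W_1\oplus W_2$). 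The same issue breaks your witness construction for $\mathcal{R}_2\subseteq\mathcal{R}_1$: with $T=\log p-H(Z\mid Q)$, setting $C_1,C_2$ at their individual tight lower bounds need not satisfy the joint lower bound $C_1+C_2\ge 2H(Z\mid Q)-H(\underline{W}\mid\underline{X}Y,Q)-R_1-R_2$, nor need $C_1+C_2\le C$ hold when both individual lower bounds are positive. So as written the proof does not establish $\mathcal{R}_1=\mathcal{R}_2$; either an additional argument specific to this sum constraint is required, or the discrepancy points to a missing $Y$ in the sum-rate line of \eqref{eq:rate-region}.
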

 \begin{proof}
 	This follows from Fourier-Motzkin elimination \cite{fourier-motzkin}.
 \end{proof}
 	


\appendices
\section{Proof of Lemmas}

\subsection{Proof of Lemma \ref{lem:NotaPMFBound}}\label{proof_lem:NotaPMFBound}
 Let $K$ denote the left hand side of \eqref{eq:NotaPMF}. Further, for the purpose of this proof, we skip the subscript $i$
Bounding the a-typical sequences of $x^n$ from the summation gives $K = K_1 + \epsilon_{X},$ where
\begin{align*}
    K_1 \deq \frac{1}{2^{nC}}\sum_{\substack{ \mu  }}\sum_{{x}^n \in \TDelta(X)}{p^{n}_{{X}}({x}^n)} \PP\Bigg(\bigg[\sum_{a\in\FF_p^{k}}\sum_{m \in \FF_p^{l}}E^{(\mu)}_{L|X^n}(a,m|x^n)\bigg] > 1\Bigg),
\end{align*}
and $\epsilon_{X} \deq \sum_{x^n \notin \TDelta(X)}{p^{n}_{{X}}({x}^n)}$. Note that $\epsilon_{X}(\delta) \searrow 0 $ as $\delta \searrow 0.$ With that, it remains to show the $K_1$ can be made arbitrarily small in expected sense.
Toward that, define
\begin{align*}
    Z^{(\mu)}_{x^n}(a,m) & \deq \frac{p^n}{(1+\eta)}\!\sum_{w^n }p_{XW}^n(x^n\!\!,w^n)\11_{\{\mathtt{w}^n(a,m,\mu) = w^n\}}\11_{\{w^n \in \TDelta(W|x^n)\}},\quad \!\!\!
    Z^{(\mu)}_{x^n}  \deq \frac{1}{2^{nS}}\!\!\sum_{m>0}\sum_{a \in \FF_q^k}\!\!Z^{(\mu)}_{x^n}(a,m)
\end{align*}
Observe the following  upper and lower bounds on $\EE[Z^{(\mu)}_{x^n}]$.
\begin{align}
    \EE[Z^{(\mu)}_{x^n}] & = \frac{p^n}{(1+\eta)}\sum_{w^n  \in \TDelta(W|x^n)}p_{XW}^n(x^n,w^n)\frac{1}{p^n} \leq \frac{p_X^n(x^n)}{(1+\eta)}  \label{eq:upperbound}\\
     \EE[Z^{(\mu)}_{x^n}] & = \frac{1}{(1+\eta)}\sum_{w^n  \in \TDelta(W|x^n)}p_{XW}^n(x^n,w^n) \geq \frac{p_X^n(x^n) 2^{n\delta_{w}}}{(1+\eta)} , \label{eq:lowerbound}
\end{align}
where the inequalities above uses the typicality arguments and $ \delta_{w}(\delta) \searrow 0$ as $\delta \searrow 0.$
Using these bounds, we perform the following simplification.
\begin{align}\label{eq:ineq1}
    \PP\Bigg(\bigg[\sum_{a\in\FF_p^{k}}\sum_{m \in \FF_p^{l}}E^{(\mu)}_{L|X^n}(a,m|x^n)\bigg] > 1\Bigg) & = \PP\Bigg(  Z^{(\mu)}_{x^n} > p_{X}^n(x^n) \Bigg)  \leq \PP\Bigg(  Z^{(\mu)}_{x^n} > (1+\eta)\EE[Z^{(\mu)}_{x^n} ] \Bigg) 
\end{align}
where the inequality above uses the inequality from \eqref{eq:upperbound}. Further, we have
\begin{align}\label{eq:ineq2}
      \PP\Bigg(  \bigg|Z^{(\mu)}_{x^n} - \EE[Z^{(\mu)}_{x^n}] \bigg| > \eta\EE[Z^{(\mu)}_{x^n} ] \Bigg) 
     \leq \frac{\EE\bigg|Z^{(\mu)}_{x^n} - \EE[Z^{(\mu)}_{x^n}] \bigg|}{\eta\EE[Z^{(\mu)}_{x^n}]} \leq \frac{(1+\eta)\sqrt{Var\left(Z^{(\mu)}_{x^n}\right)}}{\eta 2^{-n\delta_w}p^n_X(x^n)}
\end{align}
 where the first inequality follows from Markov Inequality, and the second uses (i) the Jensen's inequality for square-root function and (ii) the bound from \eqref{eq:lowerbound}.
 Combining the inequalities \eqref{eq:ineq1} and \eqref{eq:ineq2} using union bound, we obtain
 \begin{align}
    \PP\Bigg(\bigg[\sum_{a\in\FF_p^{k}}\sum_{m \in \FF_p^{l}}E^{(\mu)}_{L|X^n}(a,m|x^n)\bigg] > 1\Bigg) \leq \frac{\sqrt{Var\left(Z^{(\mu)}_{x^n}\right)}}{\eta 2^{-n\delta_w}p^n_X(x^n)} \leq \frac{(1+\eta)\sqrt{ 2^{-n(S + H(X|W) + H(W) -\delta'')}}}{\eta 2^{-n\delta_w}p^n_X(x^n)}  \label{eq:finalexp}
 \end{align}
where the last inequality follows by simplifying $Var\left(Z^{(\mu)}_{x^n}\right) $ similar to the one in \cite[Lemma 19]{cuff2009communication} to obtain
\begin{align}
    Var\left(Z^{(\mu)}_{x^n}\right) \leq 2^{-n(S + H(X|W) + H(W) -\delta'')}\frac{1}{(1+\eta)^2} \nonumber.
\end{align}
Substituting the simplification of \eqref{eq:finalexp} in $K_1$ completes the proof.

\section{Proof of Propositions} 
\subsection{Proof of Proposition \ref{prop:Lemma for Tilde_S}}
\label{proof_prop:Lemma for Tilde_S}
We bound $\Tilde{S}$ as $\Tilde{S} \leq \Tilde{S}_1 + \Tilde{S}_2 + \Tilde{S}_3, $ where
\begin{align}
    \Tilde{S}_1 & \deq \sum_{\underline{x}^n,y^n}\sum_{\substack{ \mu_1,\mu_2  }}\sum_{\substack{m_2 >0}} 
\frac{p^{n}_{\underline{X}}(\underline{x}^n)}{2^{n(C_1+C_2)}} p^{(\mu_1)}_{M_1|X_1^{n}}(0|x_1^{n}) p^{(\mu_2)}_{M_2|X_2^{n}}(m_2|x_2^{n})p^{(\mu)}_{Y^{n}|M_1M_2}(y^{n}|0,m_2) \nonumber \\ 
 \Tilde{S}_2 & \deq \sum_{\underline{x}^n,y^n}\sum_{\substack{ \mu_1,\mu_2  }}\sum_{\substack{m_2 >0}}
\frac{p^{n}_{\underline{X}}(\underline{x}^n)}{2^{n(C_1+C_2)}}p^{(\mu_1)}_{M_1|X_1^{n}}(m_1|x_1^{n}) p^{(\mu_2)}_{M_2|X_2^{n}}(0|x_2^{n})p^{(\mu)}_{Y^{n}|M_1M_2}(y^{n}|m_1,0) \nonumber \\ 
 \Tilde{S}_3 & \deq \sum_{\underline{x}^n,y^n}\sum_{\substack{ \mu_1,\mu_2  }}\sum_{\substack{m_2 >0}}
\frac{p^{n}_{\underline{X}}(\underline{x}^n)}{2^{n(C_1+C_2)}}p^{(\mu_1)}_{M_1|X_1^{n}}(0|x_1^{n}) p^{(\mu_2)}_{M_2|X_2^{n}}(0|x_2^{n})p^{(\mu)}_{Y^{n}|M_1M_2}(y^{n}|0,0) \nonumber 
\end{align}
\textbf{Analysis of $\tilde{S}_{1}$: } Consider the following simplification with regards to $\tilde{S}_1.$
\begin{align}
    \Tilde{S}_1 
&= \sum_{\underline{x}^n}\sum_{\substack{ \mu_1,\mu_2  }}
\frac{p^{n}_{\underline{X}}(\underline{x}^n)p^{(\mu_1)}_{M_1|X_1^{n}}(0|x_1^{n})}{2^{n(C_1+C_2)}}\left(\sum_{\substack{m_2 >0}}  p^{(\mu_2)}_{M_2|X_2^{n}}(m_2|x_2^{n})\left(\sum_{y^n}p^{(\mu)}_{Y^{n}|M_1M_2}(y^{n}|0,m_2)\right)\right) \nonumber \\ 
& = \sum_{{x}^n_1}\sum_{\substack{ \mu_1  }}
\frac{p^{n}_{{X}_1}({x}_1^n)p^{(\mu_1)}_{M_1|X_1^{n}}(0|x_1^{n})}{2^{nC_1}}  \nonumber \\  
 &\leq \sum_{{x}_1^n \in \TDelta(X_1)}\sum_{\substack{ \mu_1  }}
\frac{p^{n}_{{X}_1}({x}_1^n)p^{(\mu_1)}_{M_1|X_1^{n}}(0|x_1^{n})}{2^{nC_1}} \left(\11_{\{s_1^{(\mu_1)}(x^n_1) \leq 1\}} + \11_{\{s_1^{(\mu_1)}(x^n_1) > 1\}} \right) + \epsilon_{X_1} \nonumber \\
 & = \Tilde{S}_{11}  +\Tilde{S}_{12} + \epsilon_{X_1} \nonumber
\end{align}
where we define $\epsilon_{X_1}(\delta) \deq 1- \sum_{x^n \in \TDelta(X_1)}p^{n}_{{X}_1}({x}_1^n)$ and 
\begin{align}
    \Tilde{S}_{11} & \deq \sum_{{x}_1^n \in \TDelta(X_1)}\sum_{\substack{ \mu_1  }}\frac{p^{n}_{{X}_1}({x}_1^n)} {2^{nC_1}}(1-s_1^{(\mu_1)}(x^n_1))\11_{\{s_1^{(\mu_1)}(x^n_1) \leq 1\}}, \nonumber \\
    \Tilde{S}_{12} & \deq \sum_{{x}_1^n\in \TDelta(X_1)}\sum_{\substack{ \mu_1  }}\frac{p^{n}_{{X}_1}({x}_1^n)} {2^{nC_1}}\11_{\{s_1^{(\mu_1)}(x^n_1) > 1\}}. \nonumber 
\end{align}
Now, we bound each of the above terms.
For the term corresponding to $\tilde{S}_{11}$, consider the following.
\begin{align}
   \tilde{S}_{11} & \labelrel\leq{subeq:1_S11} \sum_{{x}_1^n \in \TDelta(X_1)}\sum_{\substack{ \mu_1  }}\frac{p^{n}_{{X}_1}({x}_1^n)} {2^{nC_1}}\left|1-\sum_{a_1\in\FF_p^{k}}\sum_{m_1 \in \FF_p^{l_1}}E^{(\mu_1)}_{L_1|X_1^n}(a_1,m_1|x_1^n)\right| \nonumber \\
    & \labelrel={subeq:2_S11}    \sum_{{x}_1^n \in \TDelta(X_1)}\sum_{\substack{ \mu_1  }}\frac{1} {2^{nC_1}}\left|p^{n}_{{X}_1}({x}_1^n)-\frac{p^n}{2^{nS_1}}\sum_{a_1\in\FF_p^{k}}\sum_{m_1 \in \FF_p^{l_1}}\sum_{\substack{w_1^n \in T_{\delta}(W_1|x_1^n)}}\frac{p^n_{W_1X_1}\!\!(w_1^n,x_1^n)}{(1+\eta)}\!\mathbbm{1}_{\{\UCodeword = w_1^n\}}\right| \nonumber \\
& \labelrel\leq{subeq:3_S11} \sum_{{x}_1^n \in \TDelta(X_1)}\sum_{\substack{ \mu_1  }}\frac{1} {2^{nC_1}}\left|p^{n}_{{X}_1}({x}_1^n)-\frac{p^n}{2^{nS_1}}\sum_{a_1\in\FF_p^{k}}\sum_{m_1 \in \FF_p^{l_1}}\sum_{\substack{w_1^n}}\frac{p^n_{W_1X_1}\!\!(w_1^n,x_1^n)}{(1+\eta)}\!\mathbbm{1}_{\{\UCodeword = w_1^n\}}\right| \nonumber \\ 
    & \hspace{30pt} + \label{eq:simplify_S11_tilde}
    \sum_{{x}_1^n \in \TDelta(X_1)}\sum_{\substack{ \mu_1  }}\frac{p^n}{2^{n(S_1+C_1)}}\sum_{a_1\in\FF_p^{k}}\sum_{m_1 \in \FF_p^{l_1}}\sum_{\substack{w_1^n \notin T_{\delta}(W_1|x_1^n)}}\frac{p^n_{W_1X_1}\!\!(w_1^n,x_1^n)}{(1+\eta)}\!\mathbbm{1}_{\{\UCodeword = w_1^n\}}
\end{align}
where \eqref{subeq:1_S11} follows by bounding the indicator by $1$ and using the definition of $s_1^{(\mu_1)}(\cdot)$, \eqref{subeq:2_S11} uses the definition of $E^{(\mu_1)}_{L_1|X_1^n}(a_1,\!m_1|x_1^n)$ as defined in Definition \eqref{def:E_L|X}, the inequality \eqref{subeq:3_S11} follows from triangle inequality. 

Taking expectation on \eqref{eq:simplify_S11_tilde} over the first encoders codebook generation, we obtain
\begin{align}
    \EE_{\mathbbm{C}_1}[\tilde{S}_{11}] & \leq \frac{1} {2^{nC_1}}\sum_{\substack{ \mu_1  }}\EE_{\mathbbm{C}_1}\left[\sum_{{x}_1^n \in\TDelta(X_1)} \left|p^{n}_{{X}_1}({x}_1^n)-\frac{p^n}{2^{nS_1}}\sum_{a_1\in\FF_p^{k}}\sum_{m_1 \in \FF_p^{l_1}}\sum_{\substack{w_1^n}}\frac{p^n_{W_1X_1}\!\!(w_1^n,x_1^n)}{(1+\eta)}\!\mathbbm{1}_{\{\UCodeword = w_1^n\}}\right|\right] \nonumber \\ 
    & \hspace{30pt} + \label{eq:simplify_S11_tilde1}
    \sum_{{x}_1^n \in \TDelta(X_1)}\sum_{\substack{ \mu_1  }}\frac{p^n}{2^{n(S_1+C_1)}}\sum_{a_1\in\FF_p^{k}}\sum_{m_1 \in \FF_p^{l_1}}\sum_{\substack{w_1^n \notin T_{\delta}(W_1|x_1^n)}}\frac{p^n_{W_1X_1}\!\!(w_1^n,x_1^n)}{(1+\eta)}\frac{1}{p^n}
\end{align}
For the first term in \eqref{eq:simplify_S11_tilde1},  we use Lemma \eqref{lem:changeMeasureCoveringLemma} and obtain $\EE[\tilde{S}_{11}] \leq \epsilon_{\tilde{S}_{11}}$ if $S_1 \leq I(X_1;W_1) - H(W_1) + \log{p} + \delta_{\tilde{S}_{11}}$. As for the second term we can use typicality arguments and bound it as
\begin{align}
    \sum_{{x}_1^n \in \TDelta(X_1)}& \sum_{\substack{ \mu_1  }}\frac{p^n}{2^{n(S_1+C_1)}}\sum_{a_1\in\FF_p^{k}}\sum_{m_1 \in \FF_p^{l_1}}\sum_{\substack{w_1^n \notin T_{\delta}(W_1|x_1^n)}}\frac{p^n_{W_1X_1}\!\!(w_1^n,x_1^n)}{(1+\eta)}\frac{1}{p^n} \nonumber \\
    & \leq \sum_{{x}_1^n \in \TDelta(X_1)}\sum_{\substack{w_1^n \notin T_{\delta}(W_1|x_1^n)}}\frac{p^n_{W_1X_1}\!\!(w_1^n,x_1^n)}{(1+\eta)} \leq {\epsilon'}
\end{align}
where $\epsilon' \deq \sum_{{x}_1^n \in \TDelta(X_1)}\sum_{\substack{w_1^n \notin T_{\delta}(W_1|x_1^n)}}{p^n_{W_1X_1}\!\!(w_1^n,x_1^n)} $

Finally, we have the term corresponding to  $\tilde{S}_{12}$. For this, we use Lemma \ref{lem:NotaPMFBound} and hence obtain $\EE[\tilde{S}_{12}] \leq \epsilon_{\tilde{S}_{12}}$ if $S_2 \leq I(X_1;W_1) - H(W_1) + \log{p} + \delta_{\tilde{S}_{12}}$.

\textbf{Analysis of $\tilde{S}_2$: } Due to the symmetry in $\tilde{S}_1$ and $\tilde{S}_2, $ the analysis of $\tilde{S}_2$ follows very similar arguments at that of $\tilde{S}_1$ and therefore we obtain $\EE[\tilde{S}_2] \leq \epsilon_{\tilde{S}_2}$ if $S_2 \geq I(X_2;W_2) - H(W_2) +\log{p} + \delta_{\tilde{S}_2}.$

\textbf{Analysis of $\tilde{S}_3: $} Follows by merging the above analysis of $\tilde{S}_1$ and $\tilde{S}_2.$

\subsection{Proof of Proposition \ref{prop:Lemma for S_2}} \label{proof_prop:Lemma for S_2}
Recalling $S_2$, we have
\begin{align}
    S_2 & = \sum_{\underline{x}^n}
     \sum_{\substack{ \mu_1,\mu_2  }}\sum_{\substack{ m_1 > 0, \\m_2>0  }}\sum_{\substack{a_1 \in \FF_p^k\\a_2 \in \FF_p^k}} \sum_{\substack{w_1^n,w_2^n \in \FF_p^n}} \!\frac{p^{n}_{\underline{X}}(\underline{x}^n)}{2^{n(C_1+C_2)}} E^{(\mu_1)}_{W^n_1|X_1^{n}}(w^n_1|x_1^{n})E^{(\mu_2)}_{W^n_2|X_2^{n}}(w^n_2|x_2^{n}
) \nonumber \\ &\hspace{50pt} \mathbbm{1}_{\left\{ \UCodeword = w_1^n\right\}}\mathbbm{1}_{\left\{ \VCodeword = w_1^n\right\}}
\sum_{y^n}\Bigg|p^n_{Y|Z}(y^{n}|w_1^n\oplus w_2^n)\! -\! p^n_{Y|Z}(y^{n}|f^{(\mu)}(m_1,\!m_2))\Bigg|\nonumber \\
& \leq 2 \sum_{\underline{x}^n}
     \sum_{\substack{ \mu_1,\mu_2  }}\sum_{\substack{ m_1 > 0, \\m_2>0  }}\sum_{\substack{a_1 \in \FF_p^k\\a_2 \in \FF_p^k}} \sum_{\substack{w_1^n,w_2^n \in \FF_p^n}} \!\frac{p^{n}_{\underline{X}}(\underline{x}^n)}{2^{n(C_1+C_2)}} \frac{p^np^n}{2^{n(S_1+S_2)}(1+\eta)^2}{p_{W_1|X_1}^n(w_1^n|x_1^n)} {p_{W_2|X_2}^n(w_2^n|x_2^n)}  \nonumber \\ &\hspace{55pt} \mathbbm{1}_{\left\{w_i^n \in T_{\delta}(W_i|x_i^n)\right\}}\mathbbm{1}_{\left\{w_i^n \in T_{\delta}(W_i|x_i^n)\right\}} \mathbbm{1}_{\left\{ \UCodeword = w_1^n\right\}}\mathbbm{1}_{\left\{ \VCodeword = w_1^n\right\}}\11_{\{w_1^n \oplus w_2^n,m_1,m_2\}} \nonumber 
\end{align}
where we define $\11_{\{w_1^n \oplus w_2^n,m_1,m_2\}} $ as 
\begin{align}
    \11_{\{w^n,m_1,m_2\}} \deq \11 \left\{ \exists (\tilde{w}^n, \tilde{a}): \tilde{w}^n G + h_1^{(\mu_1)}(m_1) + h_2^{(\mu_2)}(m_2), \tilde{w}^n \in \TDelta(W_1 \oplus W_2), \tilde{w}^n \neq w^n \right\}. \nonumber 
\end{align}
Using this we obtain, 
\begin{align}
    \EE[{S}_2] & \leq 2 \sum_{\underline{x}^n}
     \sum_{\substack{ \mu_1,\mu_2  }}\sum_{\substack{ m_1 > 0, \\m_2>0  }}\sum_{\substack{a_1 \in \FF_p^k\\a_2 \in \FF_p^k}} \sum_{\substack{w_1^n,w_2^n \in \FF_p^n}} \!\frac{p^{n}_{\underline{X}}(\underline{x}^n)}{2^{n(C_1+C_2)}} \frac{p^np^n}{2^{n(S_1+S_2)}(1+\eta)^2}{p_{W_1|X_1}^n(w_1^n|x_1^n)} {p_{W_2|X_2}^n(w_2^n|x_2^n)}  \nonumber \\ &\hspace{55pt} \mathbbm{1}_{\left\{w_i^n \in T_{\delta}(W_i|x_i^n)\right\}}\mathbbm{1}_{\left\{w_i^n \in T_{\delta}(W_i|x_i^n)\right\}} \mathbbm{1}_{\left\{ \UCodeword = w_1^n\right\}}\mathbbm{1}_{\left\{ \VCodeword = w_1^n\right\}}\11_{\{w_1^n \oplus w_2^n,m_1,m_2\}} \nonumber  
\end{align}
Note that, we have
\begin{align}
    \EE&\left[\11_{\{w_1^n \oplus w_2^n,m_1,m_2\}} \mathbbm{1}_{\left\{ \UCodeword = w_1^n\right\}}\mathbbm{1}_{\left\{ \VCodeword = w_1^n\right\}} \right] \nonumber \\
    & \leq \sum_{\tilde{a} \neq a}\sum_{\substack{\tilde{w}^n \in \TDelta(W_1 \oplus W_2) \\ \tilde{w}^n \neq w_1^n \oplus w^n_2}} \frac{1}{p^n}\frac{1}{p^n}\frac{1}{p^n} \leq 2^{n(H(W_1 \oplus W_2) + \delta_z)}p^{3n -k},
\end{align}
where $\delta_z(\delta) \searrow 0 $ as $\delta \searrow 0.$ This gives
\begin{align*}
    \EE[{S}_2] & \leq 2 \sum_{\underline{x}^n}
     \sum_{\substack{ m_1 > 0, \\m_2>0  }}\sum_{\substack{a_1 \in \FF_p^k\\a_2 \in \FF_p^k}} \sum_{\substack{w_1^n,w_2^n \in \FF_p^n}} {p^{n}_{\underline{X}}(\underline{x}^n)}\frac{2^{n(H(W_1 \oplus W_2) + \delta_z)}p^{n-k}}{2^{n(S_1+S_2)}(1+\eta)^2}{p_{W_1|X_1}^n(w_1^n|x_1^n)} {p_{W_2|X_2}^n(w_2^n|x_2^n)}  \nonumber \\ 
     & \leq  2^{n\left[(S_1 - R_1 ) - (\log{p} H(W_1 \oplus W_2) - \delta_{S_2})\right] },
\end{align*}
where $\delta_{S_2}(\delta) \searrow 0 $ as $\delta \searrow 0.$ 

\subsection{Proof of Proposition \ref{prop:Lemma for J1}} \label{proof_prop:Lemma for J1}
Substituting the definition of $\bar{E}^{(\mu_1)}_{W^n_1|X_1^{n}}(\cdot|\cdot)$ and $\gammaCoeff$ in $J_1$, we obtain
\begin{align*}
        J_1 & = \sum_{\underline{x}^n,w_2^n,y^n}\Bigg|p^n_{\underline{X}W_2Y}(\underline{x}^n,y^n) - \frac{1}{2^{nC_1}}\sum_{\substack{ \mu_1}}\sum_{\substack{w_1^n}} \sum_{m_1 > 0}\sum_{a_1 \in \FF_{q}^k} {p^{n}_{\underline{X}}(\underline{x}^n)}
        \frac{p^n}{2^{nS_1}(1+\eta)}{p_{W_1|X_1}^n(w_1^n|x_1^n)} \nonumber \\
     &\hspace{1.5in}\mathbbm{1}_{\left\{w_1^n \in T_{\delta}(W_1|x_1^n)\right\}}\11_{\{\UCodeword = w_1^n\}} p^n_{W_2|X_2}(w^n_2|x_2^{n})
p^n_ { Y|{Z}}(y^{n}|w_1^n+w_2^n)\Bigg| \\
& = \sum_{\underline{x}^n,w_2^n,y^n}\Bigg|p^n_{\underline{X}W_2Y}(\underline{x}^n,y^n) \nonumber \\
     & \hspace{0.5in} - \frac{1}{(1+\eta)}\frac{p^n}{2^{n(S_1+C_1)}}\sum_{\substack{ \mu_1}}\sum_{\substack{w_1^n \in T_{\delta}(W_1|x_1^n)}} \sum_{m_1 > 0}\sum_{a_1 \in \FF_{q}^k} {p^{n}_{\underline{X}\underline{W}Y}(\underline{x}^n , \underline{w}^n,y^n)} \11_{\{\UCodeword = w_1^n\}} \Bigg| \\
     & \leq J_{11} + J_{12}, 
\end{align*}
 where
 \begin{align*}
     J_{11} & \deq \sum_{\underline{x}^n,w_2^n,y^n} \Bigg|p^n_{\underline{X}W_2Y}(\underline{x}^n,y^n) - \frac{1}{(1+\eta)}\frac{p^n}{2^{n(S_1+C_1)}} \sum_{\substack{\mu_1}} \sum_{\substack{w_1^n}}   \sum_{\substack{m_1 > 0\\a_1 \in \FF_{q}^k}} {p^{n}_{\underline{X}\underline{W}Y}(\underline{x}^n, \underline{w}^n, y^n)} \11_{\{\UCodeword = w_1^n\}} \Bigg| \\
    J_{12} & \deq \sum_{\underline{x}^n,w_2^n,y^n}\Bigg|  \frac{1}{(1+\eta)}\frac{p^n}{2^{n(S_1+C_1)}}\sum_{\substack{ \mu_1}}\sum_{\substack{w_1^n \notin \TDelta(W_1)}} \sum_{m_1 > 0}\sum_{a_1 \in \FF_{q}^k} {p^{n}_{\underline{X}\underline{W}Y}(\underline{x}^n , \underline{w}^n,y^n)} \11_{\{\UCodeword = w_1^n\}} \Bigg|
 \end{align*}
 
 As for the term $J_{11}$, we use Lemma \ref{lem:changeMeasureCoveringLemma} and obtain the following bound on $\EE[J_{11}]$ as, $\EE[J_{11}] \leq \epsilon_{J_{11}}$ if $S_1 + C_1 \geq I(W_1;X_1X_2ZW_2) + \log{p} - H(W_1) + \delta_{J_{11}}.$
 
For the term $J_{12},$ applying expectation gives 
\begin{align*}
    \EE[J_{12}] \leq \frac{1}{(1+\eta)}\sum_{\underline{x}^n,w_2^n,y^n} \sum_{w_1^n \notin \TDelta(W_1)}{p^{n}_{\underline{X}\underline{W}Y}(\underline{x}^n , \underline{w}^n,y^n)} \leq \epsilon'. 
\end{align*}
where $\epsilon'(\delta) \searrow 0$ as $\delta \searrow 0.$

\subsection{Proof of Proposition \ref{prop:Lemma for Q2}}
\label{proof_prop:Lemma for Q_2}
We begin by 
applying triangle inequality on $Q_2$ to obtain $Q_2 \leq F_1 + F_2$, where
\begin{align*}
    F_1 & \deq \sum_{\underline{x}^n,y^n}\Bigg|\frac{1}{2^{n(C_1+C_2)}}\sum_{\substack{ \mu_1,\mu_2 }}\sum_{\substack{w_1^n,w_2^n \in \FF_p^n}} {p^{n}_{\underline{X}}(\underline{x}^n)}\gammaCoeff E^{(\mu_1)}_{W^n_1|X_1^{n}}(w^n_1|x_1^{n}) \nonumber \\ & \hspace{80pt}\left(p^n_{W_2|X_2}(w^n_2|x_2^{n})\! - \zetaCoeff \bar{E}^{(\mu_2)}_{W^n_2|X_2^{n}}(w^n_2|x_2^{n})\right)
p^n_ { Y|{Z}}(y^{n}|w_1^n\oplus w_2^n) \Bigg|, \\
F_2 & \deq \sum_{\underline{x}^n,y^n}\Bigg|\frac{1}{2^{n(C_1+C_2)}}\sum_{\substack{ \mu_1,\mu_2  }}\sum_{\substack{w_1^n,w_2^n \in \FF_p^n}} {p^{n}_{\underline{X}}(\underline{x}^n)}\gammaCoeff E^{(\mu_1)}_{W^n_1|X_1^{n}}(w^n_1|x_1^{n}) \nonumber \\ & \hspace{70pt}\zetaCoeff\left( \bar{E}^{(\mu_2)}_{W^n_2|X_2^{n}}(w^n_2|x_2^{n}) -  {E}^{(\mu_2)}_{W^n_2|X_2^{n}}(w^n_2|x_2^{n})\right)
p^n_ { Y|{Z}}(y^{n}|w_1^n\oplus w_2^n) \Bigg|,
\end{align*}
where $ \bar{E}^{(\mu_2)}_{W^n_2|X_2^{n}}(\cdot|\cdot)$ is defined as
\begin{align*}
   \bar{E}^{(\mu_2)}_{W^n_2|X_2^{n}}(w^n_2|x_2^{n}) & \deq  \frac{p^n}{2^{nS_2}(1+\eta)}{p_{W_2|X_2}^n(w_2^n|x_2^n)} \mathbbm{1}_{\left\{w_2^n \in T_{\delta}(W_2|x_2^n)\right\}}
\end{align*}
Considering the term corresponding to $F_1$, we bound it using triangle inequality applied by adding and subtracting the following terms within its modulus:
\begin{align*}
    (i) \;& \sum_{w_1^n,w_2^n \in \FF_p^n} {p^{n}_{\underline{X}}(\underline{x}^n)}p^n_{W_1|X_1}(w^n_1|x_1^{n}) p^n_{W_2|X_2}(w^n_2|x_2^{n})
p^n_ { Y|{Z}}(y^{n}|w_1^n\oplus w_2^n) \\
(ii) \; & \frac{1}{2^{nC_2}} \sum_{\mu_2} \sum_{w_1^n,w_2^n \in \FF_p^n} {p^{n}_{\underline{X}}(\underline{x}^n)}p^n_{W_1|X_1}(w^n_1|x_1^{n}) \frac{p^n\zetaCoeff}{2^{nS_2} (1+\eta)}  p^n_{W_2|X_2}(w^n_2|x_2^{n})
p^n_ { Y|{Z}}(y^{n}|w_1^n\oplus w_2^n) \\
(iii) \; &  \frac{1}{2^{n(C_1+C_2)}} \sum_{\mu_1, \mu_2} \sum_{w_1^n,w_2^n \in \FF_p^n}\!\!\!\!\!\! {p^{n}_{\underline{X}}(\underline{x}^n)}\gammaCoeff E^{(\mu_1)}_{W^n_1|X^n_1}(w^n_1|x_1^{n}) \frac{p^n\zetaCoeff}{2^{nS_2} (1+\eta)}\  p^n_{W_2|X_2}(w^n_2|x_2^{n})
p^n_ { Y|{Z}}(y^{n}|w_1^n\oplus w_2^n) 
\end{align*}
This gives the following bound $F_1 \leq F_{11} + F_{12} + F_{13} + F_{14},$ where
\begin{align*}
    F_{11} & \deq \sum_{\underline{x}^n,y^n}\Bigg|\frac{1}{2^{nC_1}}\sum_{\substack{ \mu_1 }}\sum_{\substack{w_1^n,w_2^n \in \FF_p^n}} {p^{n}_{\underline{X}}(\underline{x}^n)}\left(\gammaCoeff E^{(\mu_1)}_{W^n_1|X_1^{n}}(w^n_1|x_1^{n}) - p^n_{W^n_1|X_1^{n}}(w^n_1|x_1^{n}) \right)p^n_ { Y|{Z}}(y^{n}|w_1^n\oplus w_2^n) \Bigg| \\ 
    F_{12} & \deq \sum_{\underline{x}^n,y^n}\Bigg| \sum_{w_1^n,w_2^n \in \FF_p^n} {p^{n}_{\underline{X}}(\underline{x}^n)}p^n_{W_1|X_1}(w^n_1|x_1^{n})\left( p^n_{W_2|X_2}(w^n_2|x_2^{n})\right. \\ & \hspace{2.7in} \left. - \frac{p^n\zetaCoeff}{2^{nS_2} (1+\eta)}  p^n_{W_2|X_2}(w^n_2|x_2^{n}) \right)
p^n_ { Y|{Z}}(y^{n}|w_1^n\oplus w_2^n)\Bigg| \\
        F_{13} & \deq \sum_{\underline{x}^n,y^n}\Bigg| \frac{1}{2^{nC_2}} \sum_{\mu_2} \sum_{w_1^n,w_2^n \in \FF_p^n} {p^{n}_{\underline{X}}(\underline{x}^n)}\left(p^n_{W_1|X_1}(w^n_1|x_1^{n}) \right. \\ & \hspace{1in} \left. - \frac{1}{2^{nC_1}} \sum_{\mu_1} \gammaCoeff E^{(\mu_1)}_{W^n_1|X^n_1}(w^n_1|x_1^{n})\right) \frac{p^n\zetaCoeff}{2^{nS_2} (1+\eta)}  p^n_{W_2|X_2}(w^n_2|x_2^{n})
    p^n_ { Y|{Z}}(y^{n}|w_1^n\oplus w_2^n)  \Bigg| \\
    F_{14} & \deq \sum_{\underline{x}^n,y^n}\Bigg|\frac{1}{2^{n(C_1+C_2)}} \sum_{\mu_1, \mu_2} \sum_{\substack{w_1^n\in \FF_p^n}}\sum_{w_2^n  \notin \TDelta(W_2)}{p^{n}_{\underline{X}}(\underline{x}^n)}\gammaCoeff \bar{E}^{(\mu_1)}_{W^n_1|X^n_1}(w^n_1|x_1^{n}) \frac{p^n\zetaCoeff}{2^{nS_2} (1+\eta)}\\ & \hspace{3.5in}   p^n_{W_2|X_2}(w^n_2|x_2^{n}) 
p^n_ { Y|{Z}}(y^{n}|w_1^n\oplus w_2^n)  \Bigg| \\
\end{align*}
We start by analyzing $F_{11}.$ Note that $F_{11}$ is exactly similar to the term $Q_1$ and hence using the same rate constraints as $Q_1$, this term can be bounded. Next consider the term corresponding to $F_{12}.$ Substituting the definition of $\zetaCoeff$ gives
\begin{align*}
    F_{12} & =  \sum_{\underline{x}^n,y^n}\Bigg|  {p^{n}_{\underline{X}Y}(\underline{x}^n,y^n)} -  \frac{p^n}{2^{nS_2} (1+\eta)} \sum_{w_2^n \in \FF_p^n} \sum_{\substack{m_2 > 0\\a_2 \in \FF_{q}^k}} \11_{\{\VCodeword = w_2^n\}}  p^{n}_{\underline{X}W_2Y}(\underline{x}^n,w_2^n,y^n) \Bigg|
\end{align*}
Lemma \ref{lem:changeMeasureCoveringLemma} gives us functions $\epsilon_{F_{12}}(\delta), \delta_{F_{12}}(\delta) $ such that if $$S_1 \geq I(W_2;X_1X_2Y) - H(W_2) + \log{p} + \delta_{F_{12}},$$ then $\EE[F_{12}] \leq   \epsilon_{F_{12}}$, where $\epsilon_{F_{12}}(\delta), \delta_{F_{12}}(\delta) \searrow 0 $ as $\delta \searrow 0.$

\noindent Now, we move on to considering in the term corresponding to $F_{13}.$ Taking expectation with respect to $G, h_1^{(\mu_1)}$ and $h_2^{(\mu_2)}$ gives
\begin{align*}
\EE[F_{13}] & = \EE_{G,h_1}\left[ \sum_{\underline{x}^n,y^n}\Bigg| \frac{1}{2^{nC_2}} \sum_{\mu_2} \sum_{w_1^n,w_2^n \in \FF_p^n} {p^{n}_{\underline{X}}(\underline{x}^n)}\bigg(p^n_{W_1|X_1}(w^n_1|x_1^{n}) \right. 
    \\ & \hspace{0.3in} \left. - \frac{1}{2^{nC_1}} \sum_{\mu_1} \gammaCoeff E^{(\mu_1)}_{W^n_1|X^n_1}(w^n_1|x_1^{n})\bigg) \EE_{h_2|G}\left[\frac{p^n\zetaCoeff}{2^{nS_2} (1+\eta)} \right] p^n_{W_2|X_2}(w^n_2|x_2^{n})
p^n_ { Y|{Z}}(y^{n}|w_1^n\oplus w_2^n)\right]\\
& = \frac{1}{(1+\eta)}\EE_{G,h_1}\left[ \sum_{\underline{x}^n,y^n}\Bigg| \frac{1}{2^{nC_2}} \sum_{\mu_2} \sum_{w_1^n,w_2^n \in \FF_p^n} {p^{n}_{\underline{X}}(\underline{x}^n)}\bigg(p^n_{W_1|X_1}(w^n_1|x_1^{n}) \right. 
    \\ & \hspace{1.5in} \left. - \frac{1}{2^{nC_1}} \sum_{\mu_1} \gammaCoeff E^{(\mu_1)}_{W^n_1|X^n_1}(w^n_1|x_1^{n})\bigg) p^n_{W_2|X_2}(w^n_2|x_2^{n})
p^n_ { Y|{Z}}(y^{n}|w_1^n\oplus w_2^n)\right]\\
& = \EE\left[\frac{J}{(1+\eta)}\right],
\end{align*}
where the above equalities follows from the fact that $h_1^{(\mu_1)}$ and $h_1^{(\mu_1)}$ were  generated independently and from using the definition of $J$ as stated earlier. Therefore, using the same analysis and rate constraints as $J$, we can bound the term $F_{13}$. Finally, we remain with the term $F_{14}$. Applying expectation on $F_{14}$ gives
\begin{align*}
    \EE\left[F_{14}\right] & \leq \EE_{G,h_2}\left[\sum_{\underline{x}^n}\Bigg| \sum_{\mu_1, \mu_2} \!\!\!\!\!\sum_{\substack{w_1^n\in \FF_p^n\\w_2^n  \in \TDelta(W_2)}}\!\!\!\! {p^{n}_{\underline{X}}(\underline{x}^n)}\frac{E_{h_1}\left[\gammaCoeff\right]}{2^{n(C_1+C_2)}} E^{(\mu_1)}_{W^n_1|X^n_1}(w^n_1|x_1^{n}) \frac{p^n\zetaCoeff}{2^{nS_2} (1+\eta)} p^n_{W_2|X_2}(w^n_2|x_2^{n})\Bigg| \right]\\
    & \leq \sum_{\underline{x}^n} \frac{1}{2^{nC_2}} \sum_{\mu_2} \sum_{\substack{w_1^n\in \TDelta(W_1)\\w_2^n  \notin \TDelta(W_2)}}\!\!\!\! {p^{n}_{\underline{X}}(\underline{x}^n)} \frac{p^n_{W_1|X_1}(w^n_1|x_1^{n})}{(1+\eta)} \EE\left[\frac{p^n\zetaCoeff}{2^{nS_2} (1+\eta)}\right] p^n_{W_2|X_2}(w^n_2|x_2^{n}) \\
   & \leq \frac{1}{(1+\eta)^2}\sum_{\underline{x}^n}  \sum_{\substack{w_2^n  \notin \TDelta(W_2)}}\!\!\!\! {p^{n}_{\underline{X}}(\underline{x}^n)}   p^n_{W_2|X_2}(w^n_2|x_2^{n}) \leq \epsilon_w',
\end{align*}
where $\epsilon_w'(\delta) \searrow 0$ as $\delta \searrow 0.$ This completes the analysis for the term corresponding to $F_1$. Finally we remain with the analysis of the term $F_{2}.$ Simplifying $F_2$ gives
\begin{align}
    F_2 & \leq \frac{1}{2^{n(C_1+C_2)}} \sum_{\substack{ \mu_1,\mu_2  }} \sum_{\underline{x}^n} {p^{n}_{\underline{X}}(\underline{x}^n)}
    \Bigg(\sum_{\substack{w_1^n\in \FF_p^n}} \gammaCoeff E^{(\mu_1)}_{W^n_1|X_1^{n}}(w^n_1|x_1^{n})\Bigg)   \nonumber \\ & \hspace{70pt}\Bigg|\sum_{\substack{w_2^n \in \FF_p^n}}\zetaCoeff\left( \bar{E}^{(\mu_2)}_{W^n_2|X_2^{n}}(w^n_2|x_2^{n}) -  {E}^{(\mu_2)}_{W^n_2|X_2^{n}}(w^n_2|x_2^{n})\right) \Bigg|, \nonumber \\
& = \frac{1}{2^{nC_2}} \sum_{\substack{ \mu_1,\mu_2  }} \sum_{\underline{x}^n} {p^{n}_{\underline{X}}(\underline{x}^n)}
    \bigg(\sum_{m_1 > 0}p_{M_1|X^n_1}(m_1|x_1^n)  \bigg) \nonumber \\ & \hspace{70pt}\Bigg|\sum_{\substack{w_2^n \in \FF_p^n}}\zetaCoeff\left( \bar{E}^{(\mu_2)}_{W^n_2|X_2^{n}}(w^n_2|x_2^{n}) -  {E}^{(\mu_2)}_{W^n_2|X_2^{n}}(w^n_2|x_2^{n})\right) \Bigg|, \nonumber \\
  & \leq \frac{1}{2^{nC_2}} \sum_{\substack{ \mu_1,\mu_2  }} \sum_{\underline{x}^n} {p^{n}_{\underline{X}}(\underline{x}^n)}
  \Bigg|\sum_{\substack{w_2^n \in \FF_p^n}}\zetaCoeff\left( \bar{E}^{(\mu_2)}_{W^n_2|X_2^{n}}(w^n_2|x_2^{n}) -  {E}^{(\mu_2)}_{W^n_2|X_2^{n}}(w^n_2|x_2^{n})\right) \Bigg|, \nonumber \\
    = \tilde{S}_{2}, \nonumber
\end{align}
where the last inequality above follows by using $\bigg(\sum_{m_1 > 0}p_{M_1|X^n_1}(m_1|x_1^n)  \bigg) \leq 1$ and the last equality follows by recalling the definition of $\tilde{S}_2$. Therefore, using the constraints obtained in the analysis of $\tilde{S}_2$, we complete the proof of the proposition.

\newpage
\bibliographystyle{IEEEtran}
\bibliography{references}

\end{document}